\def\PYX{{ P_{Y|X} }}
\def\X{{\mathcal{X}}}
\def\Y{{\mathcal{Y}}}
\def\Z{{\mathcal{Z}}}
\def\V{{\mathcal{V}}}
\def\L{{\mathcal{L}}}
\def\P{{\mathbb{P}}}
\def\E{{\mathbb{E}}}
\def\R{{\mathbb{R}}}
\def\1{{\mathbf{1}}}
\def\0{{\mathbf{0}}}
\def\I{{\mathds{1}}}
\def \taumax {{\tau_{\max}}}
\def \taumaxtwo {{\tau_{\max_{2}}}}
\def \taupair {{\tau_{ \text{pair} }}}
\def \tautrip {{\tau_{ \text{trip} }}}
\def \calP {{\mathcal{P}}}
\def \U {{\mathcal{U}}}
\def \calV {{\mathcal{V}}}
\def \calI {{\mathcal{I}}}
\renewcommand{\complement}{\mathrm{c}}
\newcommand{\T}{\mathrm{T}}
\newcommand{\pa}{\mathrm{pa}}
\newtheorem{theorem}{Theorem}
\newtheorem{proposition}{Proposition}
\newtheorem{corollary}{Corollary}
\begin{document}
\bstctlcite{IEEEexample:BSTcontrol}

\title{Bounds on Maximal Leakage\\over Bayesian Networks}

\IEEEoverridecommandlockouts

\author{
    \IEEEauthorblockN{Anuran Makur\IEEEauthorrefmark{1} and Japneet Singh\IEEEauthorrefmark{2}\thanks{The author ordering is alphabetical.} \thanks{This work was supported by the National Science Foundation (NSF) CAREER Award under Grant CCF-2337808.}}
    \IEEEauthorblockA{
        \IEEEauthorrefmark{1}Department of Computer Science and \IEEEauthorrefmark{1}\IEEEauthorrefmark{2}School of Electrical and Computer Engineering,\\
        Purdue University, West Lafayette, IN, USA\\
        Email: \{amakur, sing1041\}@purdue.edu
}}

\maketitle
\thispagestyle{plain}
\pagestyle{plain}

\begin{abstract} Maximal leakage quantifies the leakage of information from data $X \in \mathcal{X}$ due to an observation $Y$. While fundamental properties of maximal leakage, such as data processing, sub-additivity, and its connection to mutual information, are well-established, its behavior over Bayesian networks is not well-understood and existing bounds are primarily limited to binary $\mathcal{X}$. In this paper, we investigate the behavior of maximal leakage over Bayesian networks with finite alphabets. Our bounds on maximal leakage are established by utilizing coupling-based characterizations which exist for channels satisfying certain conditions. Furthermore, we provide more general conditions under which such coupling characterizations hold for $|\mathcal{X}| = 4$. In the course of our analysis, we also present a new simultaneous coupling result on maximal leakage exponents. Finally, we illustrate the effectiveness of the proposed bounds with some examples.
\end{abstract}
\section{Introduction}
Characterizing information flow over a Bayesian network is a fundamental problem in information theory. A key question in this context is: How much information leakage is incurred by a processing step about a source $X \in \X$ as it is processed over a Bayesian network? Consider a system containing sensitive data \(X\) and an adversary interacting with the system to receive a noisy observation \(Y\). The composite system between \(X\)  and \(Y\) might be complex and, in many scenarios, can be effectively modeled by a Bayesian network. An important quantity useful in quantifying the ``leakage'' of information from $X$ to any subset of nodes in the Bayesian network is maximal leakage. 

In this paper, we aim to establish bounds on maximal leakage over Bayesian networks that utilize both the structural properties of networks (such as parent node relationships) and the properties of individual channels comprising the network. Maximal leakage, introduced by \cite{issa2016}, is defined as the multiplicative increase in the probability of correctly guessing a possibly randomized function of the sensitive data $X$, given an observation $Y$. Formally, given a joint distribution \(P_{X,Y}\) on finite alphabets \(\X\) and \(\Y\), the maximal leakage from \(X\) to \(Y\) is defined as follows:
\begin{equation} \label{maximal leakage expression}
\L(X \rightarrow Y) \triangleq \sup_{U-X-Y-\hat{U}} \log \left(  \frac{\P(U = \hat{U})}{\displaystyle{\max_{u \in \U} \P(U = u)} }\right), 
\end{equation}
where \(U-X-Y-\hat{U} \) forms a Markov chain, and \(U\) and \(\hat{U}\) take values in the same finite, but arbitrary, alphabet set $\U$. It was shown in \cite{issa2016} that $\L(X \rightarrow Y)$ in the above definition reduces to Sibson's mutual information of order infinity \cite{Sibson1969,verdu2015alpha}:
\begin{equation}
\L(X \rightarrow Y) = \log\left(\sum_{y \in \Y} \max_{x \in \X : P_X(x) > 0} P_{Y|X}(y|x) \right).
\end{equation}
Maximal leakage has been applied in various domains, including security \cite{SecurityLeakage}, privacy \cite{PointwiseMaximalLeakage}, adaptive data analysis, and learning theory \cite{AdaptiveDataAnalysis}. Its fundamental properties, such as data processing, sub-additivity, and relation to mutual information \cite{issa2016}, are well understood. However, despite the basic known properties for maximal leakage, its behavior in the context of Bayesian networks remains far from fully understood.

Bayesian networks \cite{pml2Book}, which represent joint distributions through directed acyclic graphs, offer a flexible framework for modeling dependencies among a set of variables. In these networks, nodes represent random variables, and edges represent conditional dependencies. The joint probability distribution of a set of random variables $Y_1, \dots, Y_n$ in a Bayesian network is given by
\begin{equation} \label{BayesianNetworkFactorization}    
    P_{Y_1, \dots, Y_n} = \prod_{i=1}^n  P_{Y_i | \pa(Y_i)},
\end{equation}
where $\pa(Y_i)$ denotes the set of parent nodes of $Y_i$ in the network. Tools for analyzing maximal leakage in the context of Bayesian networks can be useful in extending the applications of maximal leakage to more complex scenarios that involve multiple steps of processing and iterative refinement based on data, especially in the context of adaptive data analysis \cite{DworkA,DworkB} and differential privacy \cite{WagnerBinaryCompositionTheorem}. In a simple chain-structured Bayesian network, the leakage from one node to another can be bounded by considering the conditional probabilities along the path. However, analysis of more complex networks presents a significant challenge. Notably, prior works, such as \cite{WagnerBinaryCompositionTheorem, BinaryMaximalLeakage}, have explored similar problems, but are limited to binary input settings and rely on Dobrushin-type coupling techniques. 

\subsubsection*{Main Contributions} 

In this work, we leverage recent developments in coupling characterizations of maximal leakage \cite{MakurSingh2024}\textemdash known to hold for specific channels\textemdash to derive bounds for Bayesian networks over finite alphabets. We establish these bounds by constructing a simultaneous coupling for joint probability mass functions (PMFs), which serves as a key tool in our analysis. Additionally, we also provide a more general condition under which such couplings are known to exist by extending the characterization for the special case of $|\X| = 4$. Furthermore, we provide examples illustrating some applications of our bounds.

\subsection{Related Literature}
Maximal leakage is a widely studied quantity in the study of information leakage, with applications spanning information theory \cite{issa2016}, computer security \cite{SecurityLeakage}, and machine learning \cite{GeneralizationError1}. The concept of leakage was first formalized in the context of computer security. \cite{Geoffrey2009} defined leakage as the logarithm of the multiplicative increase in the probability of correctly guessing a random variable $X$ upon observing $Y$. Building on this, \cite{SecurityLeakage} introduced a worst-case version of this metric, maximizing it over all prior distributions on $X$. A closed-form expression for maximal leakage in the discrete case was derived in \cite{issa2016}, which also provided an analysis of its several variants and their fundamental properties, and introduced a conditional version.

The concept of maximal leakage has since been generalized to various forms that unify several leakage measures across different applications. Maximal $\alpha$-leakage, introduced by \cite{AlphaLeakage}, incorporates a parameter $\alpha$ that interpolates between mutual information ($\alpha = 1$) and maximal leakage ($\alpha = \infty$). Similarly, maximal $(\alpha, \beta)$-leakage \cite{AlphaBetaLeakage} unifies several leakage measures, including local differential privacy and R\'enyi differential privacy. In privacy-related problems, pointwise maximal leakage \cite{PointwiseMaximalLeakage} quantifies leakage about a secret $X$ via individual outcomes. Other variants, such as $g$-leakage \cite{Espinoza_Min_entropy,Alvim2012}, have been proposed to address other scenarios, particularly in the context of entropy-based bounds on leakage.

Maximal leakage has been widely applied in machine learning and adaptive data analysis. \cite{AdaptiveDataAnalysis} used it to derive statistical guarantees for both adaptive and non-adaptive settings. In learning theory, maximal leakage has been used to analyze the generalization error of iterative algorithms \cite{GeneralizationError1,GeneralizationError2}. \cite{LearningCompressionLeakage} employed it to upper bound misclassification rates in meta-universal coding strategies. Additionally, \cite{HypothesisTestingMaximalLeakage} examined privacy-utility tradeoffs, using maximal leakage as a privacy metric and as a performance metric in hypothesis testing.

A closely related work, \cite{WagnerBinaryCompositionTheorem}, presents an adaptive composition theorem for the overall leakage $\L(X \rightarrow (Y_1, \dots, Y_n))$ in a Bayesian network, expressed in terms of individual leakages under the assumption of a binary input alphabet \({\X}\). Their approach employs Dobrushin-type coupling arguments and the degradation properties of composite channels; however, the analysis is confined to binary input spaces.  Furthermore, the bounds derived in this work align with established results on contraction coefficients (cf. \cite{MakurZheng2015,MakurZheng2020,Makur2019,Raginsky2016,Polyanskiy2017}) in Bayesian networks. In particular, they correspond to the bounds formulated for the contraction coefficients of Kullback-Leibler divergence and total variation distance in \cite{Polyanskiy2017}, as well as for Doeblin coefficients in \cite{MakurSingh2023a,MakurSingh2024,LuMakurSingh2024}.

\section{Background and Preliminaries}
Let $X, Y$ be random variables defined on finite alphabets $\X = [n] \triangleq \{1, \dots, n\}$ and $\Y$ respectively, and let $P_{X,Y}$ denote joint PMF of $X, Y$. Without loss of generality, we will assume that the marginal $P_{X}$ is such that $P_{X}(x) > 0 $ for all $x \in \X$. Let $P_i \in \calP_\Y$ (the probability simplex over $\Y$) denote the conditional PMF $P_{Y|X = i}$ for $i \in [n]$. Consider a collection of random variables $Y_1,\dots, Y_n \in \Y$. A \emph{coupling} of random variables $Y_1,\dots,Y_n$ distributed as $P_1,\dots,P_n$, respectively, is defined as a joint distribution $P_{Y_1,\dots,Y_n}$ that preserves the marginals. For the channel $P_{Y|X} = \big[P_1^{\T} , \dots , P_n^{\T}\big]^{\T} \in \R^{n \times |\Y|}_{\mathsf{sto}}$ formed by stacking the PMFs $P_{1}, \ldots, P_{n} \in \calP_\Y$, we introduce a quantity called the \emph{leakage exponent} denoted by $\taumax(\PYX)$. This quantity, referred to as \emph{max-Doeblin} in \cite{MakurSingh2024} is defined as 
\begin{equation}
  \begin{aligned}
      \taumax(\PYX) & \triangleq \sum_{ y \in \Y} \max\{ P_1(y), \dots, P_n(y)\}.
\end{aligned}
\end{equation}
 Under our assumption that $P_{X}(x) > 0 $ for all $x \in \X$, $\taumax(\PYX )$ corresponds to the exponentiated maximal leakage $\exp(\L(X \rightarrow Y) )$ by \cref{maximal leakage expression}. This equivalence motivates referring to $\taumax(\PYX)$ as the \textit{leakage exponent}. We will interchangeably use the notation $\taumax(\PYX ) = \taumax(P_1,\ldots,P_n)$ and similarly for related quantities.

\subsubsection*{Coupling-based Bound} 

Recently, \cite{MakurSingh2024} established a coupling-based bound for the leakage exponent \( \taumax(\PYX) \), which is formally stated in the following proposition:
\begin{proposition}[{Minimal Coupling Bound for Leakage Exponent \cite[Theorem 4]{MakurSingh2024}}] \label{prop:MaxDoeblinCoupling}  
For any random variables \( Y_1, \dots, Y_n \in \Y \) distributed according to the PMFs \( P_{1}, \dots, P_{n} \in \calP_\Y \), respectively, we have  
\begin{equation}  
\taumax(\PYX) \leq \min_{\substack{ P_{Y_1, \dots, Y_n} : \\ P_{Y_1 } = P_1, \, \dots \, , \, P_{Y_n} =P_n} } \! \sum_{y \in \Y} \P\left( \cup_{i=1}^n \{Y_i = y\}\right), \label{Eq:MaxDoeblinMinimalCoupling}  
\end{equation}  
where the minimum is taken over all couplings of \( P_{1}, \dots, P_{n} \), and \( \P(\cdot) \) denotes the probability law corresponding to the  coupling. Moreover,  if 
\begin{equation*}
\taumaxtwo(\PYX) \triangleq \sum_{y \in \Y} \max\nolimits_2\{P_{1}(y), \dots, P_{n}(y)\} \leq 1,     
\end{equation*}
 then equality in \cref{Eq:MaxDoeblinMinimalCoupling} holds, where $\max\nolimits_2\{x_1, \dots , x_n\}$ denotes the second largest value among $\{x_1, \dots , x_n\}$. 
\end{proposition}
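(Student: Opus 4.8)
The plan is to prove the two assertions separately. The inequality \cref{Eq:MaxDoeblinMinimalCoupling} follows from the union bound: for any coupling $P_{Y_1,\dots,Y_n}$ of $P_1,\dots,P_n$, with law $\P$, and any $y\in\Y$ we have $\P(\cup_{i=1}^n\{Y_i=y\})\ge\max_{i\in[n]}\P(Y_i=y)=\max_{i\in[n]}P_i(y)$, since a coupling preserves the marginals. Summing over $y\in\Y$ gives $\sum_{y\in\Y}\P(\cup_{i=1}^n\{Y_i=y\})\ge\taumax(\PYX)$, and because this holds for every coupling it holds in particular for the minimizing one.

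For the equality claim, I would first record the identity $\sum_{y\in\Y}\P(\cup_{i=1}^n\{Y_i=y\})=\E\big[\,\big|\{Y_1,\dots,Y_n\}\big|\,\big]$, i.e.\ the objective in \cref{Eq:MaxDoeblinMinimalCoupling} is the expected number of distinct symbols realized by the coupling. For each $y$ fix an index $i^*(y)$ with $P_{i^*(y)}(y)=M(y):=\max_{i\in[n]}P_i(y)$. Since $\{Y_{i^*(y)}=y\}\subseteq\cup_{i}\{Y_i=y\}$ and the former already has probability $M(y)$, the summand at $y$ equals $M(y)$ for a given coupling if and only if $\{Y_i=y\}\subseteq\{Y_{i^*(y)}=y\}$ almost surely for all $i\in[n]$. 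Hence equality in \cref{Eq:MaxDoeblinMinimalCoupling} is equivalent to the existence of a \emph{simultaneously maximal} coupling: one supported on tuples $(y_1,\dots,y_n)$ in which every symbol that appears occupies its designated coordinate $i^*(\cdot)$. The remaining task is to construct such a coupling when $\taumaxtwo(\PYX)\le 1$.

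To do so, I would partition $\Y=\bigsqcup_{k=1}^n\Y_k$ with $\Y_k:=\{y:i^*(y)=k\}$ and exploit two facts: (i) $s_k:=\sum_{y\in\Y_k}M(y)=\sum_{y\in\Y_k}P_k(y)\le 1$; and (ii) for $y\notin\Y_k$, since $k$ is not an argmax at $y$, $P_k(y)\le\max\nolimits_2\{P_1(y),\dots,P_n(y)\}$, so $1-s_k=\sum_{y\notin\Y_k}P_k(y)\le\taumaxtwo(\PYX)\le 1$. Working on $[0,1)$ with Lebesgue measure, place for each $k$ pairwise disjoint ``primary blocks'' $A_y$ ($y\in\Y_k$) of length $M(y)$ — possible by (i) — and set $Y_k=y$ on all of $A_y$. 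It then remains to route the leftover mass of each coordinate $k$ (total $1-s_k$, with $P_k(y)$ destined for symbol $y$, $y\notin\Y_k$) entirely into the primary blocks of the \emph{other} coordinates, in such a way that no coordinate overwrites mass it has already committed; this guarantees that no new symbol is ever produced, so the coupling is simultaneously maximal. Carrying out the routing greedily, one symbol (or coordinate) at a time, reduces the matter to a transportation-feasibility question, which I would resolve with a max-flow / Strassen-type argument in which $\taumaxtwo(\PYX)\le 1$ is precisely the condition supplying a feasible routing: by (ii) the leftover ``second-tier'' masses are each dominated by $\max\nolimits_2\{P_1(y),\dots,P_n(y)\}$ and sum, over $y$, to at most $\taumaxtwo(\PYX)\le 1$ for each coordinate, which is exactly enough to be absorbed by the union of the other coordinates' primary blocks. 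Once feasibility is in hand, the marginals come out to be $P_1,\dots,P_n$ by construction, and $\{Y_i=y\}\subseteq\{Y_{i^*(y)}=y\}$ holds because symbol $y$ is realized only inside $A_y\subseteq\{Y_{i^*(y)}=y\}$.

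I expect the feasibility of this simultaneous routing to be the main obstacle, and the hypothesis cannot be dropped: for the three ``rock--paper--scissors'' PMFs $P_1=(\tfrac12,\tfrac12,0)$, $P_2=(\tfrac12,0,\tfrac12)$, $P_3=(0,\tfrac12,\tfrac12)$ on a ternary alphabet one has $\taumax=\taumaxtwo=\tfrac32$, yet every coupling realizes at least two distinct symbols almost surely, so $\E\big[\,|\{Y_1,Y_2,Y_3\}|\,\big]\ge 2>\tfrac32$. The crux is therefore to show that the assumption $\taumaxtwo(\PYX)\le 1$ makes the requisite Hall-type inequalities hold.
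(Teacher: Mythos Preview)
Your proof of the inequality \cref{Eq:MaxDoeblinMinimalCoupling} is correct and is essentially the paper's argument: the paper writes the same bound recursively as $\P(\cup_{i=1}^n\{Y_i=y\})\ge\max\{P_1(y),\P(\cup_{i\ge2}\{Y_i=y\})\}\ge\cdots$, which is just your one-line observation $\P(\cup_i\{Y_i=y\})\ge\max_i P_i(y)$ unrolled.

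For the equality claim, however, there is a genuine gap. Your reduction is sound: you correctly note that equality forces $\P(\cup_i\{Y_i=y\})=M(y)$ for every $y$, hence $\{Y_i=y\}\subseteq\{Y_{i^*(y)}=y\}$ a.s., so one must build a coupling supported on tuples in which every occurring symbol also sits in its designated argmax coordinate. But the construction you sketch is not completed. The placement of the primary blocks $A_y$ across different groups $\Y_k$ is left free, yet the subsequent routing for coordinate $k$ must fit $P_k(y)$ mass inside $A_y\setminus\bigcup_{y'\in\Y_k}A_{y'}$, disjointly over $y\notin\Y_k$; whether this is feasible depends on how the $A_y$'s from different groups overlap, and you never pin that down. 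Your observation~(ii) is correct but, as written, only yields $1-s_k\le\taumaxtwo\le 1$, which is vacuous since $1-s_k\le1$ trivially; it does not by itself deliver the Hall-type inequalities you need for simultaneous feasibility across all $k$. You acknowledge this (``the crux is therefore to show\dots''), and indeed that crux is the entire content of the equality statement.

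By contrast, the proof in the cited reference (and the related constructions in the paper's appendices) does not route mass on $[0,1)$ at all. It writes the coupling explicitly as a convex combination of finitely many ``diagonal'' distributions: one component supported on $\{y_1=\cdots=y_n\}$ with weights $P_{\min}(y)/\tau$, further components supported on tuples where all but one coordinate agree, and so on. The condition $\taumaxtwo\le1$ enters algebraically, as precisely what makes the mixture weights nonnegative and sum to one while preserving all marginals. This explicit construction sidesteps the feasibility question entirely; if you want to pursue your geometric/flow approach, you would need to either specify the placement of the $A_y$'s and verify the Hall conditions directly, or exhibit a bijection between your routing and the algebraic mixture.
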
  
The bound in \cref{Eq:MaxDoeblinMinimalCoupling} follows by the  observation that for any coupling $\P(\cdot)$ we have
  \begin{align}
             \nonumber \sum_{y  \in \Y}  \P(\cup_{i=1}^{n}\{Y_{i} &=y\}) \! \geq \! \sum_{y  \in \Y} \max \{P_{1}(y), \mathbb{P}(\cup_{i=2}^{n}\{Y_{i}=y\})\}, \\
             & \geq \sum_{y  \in \Y} \max \{P_1(y), P_{2}(y), \ldots, P_{n}(y)\}, \label{lower bound for general n}
  \end{align}
  where the first inequality follows since $\mathbb{\P}(A \cup B) \geq \max \{\P(A), \P(B)\}$ and that the coupling preserves marginals, while the second inequality follows by recursively applying the same argument. When the equality in \cref{Eq:MaxDoeblinMinimalCoupling} holds, this result provides a coupling characterization for \( \taumax(\PYX) \). For \( n = 3 \), \cite{MakurSingh2024} showed that the condition \( \taumaxtwo(\PYX) \leq 1 \) is both sufficient and necessary, i.e., for \( n = 3 \) equality in \cref{Eq:MaxDoeblinMinimalCoupling} if and only if \( \taumaxtwo(\PYX) \leq 1 \). However, for \( n \geq 4 \), equality in \cref{Eq:MaxDoeblinMinimalCoupling} may hold under more general conditions. 
  
  In this work, we primarily focus on channels satisfying \( \taumaxtwo(\PYX) \leq 1 \), ensuring that equality in \cref{Eq:MaxDoeblinMinimalCoupling} is satisfied. For such channels, we leverage this coupling characterization to derive bounds on maximal leakage in Bayesian networks. Examples of such channel include the $q$-ary symmetric channel $W_{\delta}$ with crossover probability $\delta$, which satisfies \( \taumaxtwo(W_{\delta}) \leq 1 \) if \( \delta \leq 1 - 1/q \), and the erasure channel $\mathsf{E}_{\epsilon}$ with erasure probability $\epsilon$, which always satisfies \( \taumaxtwo(\mathsf{E}_{\epsilon}) \leq 1 \). Furthermore, for \( n = 4 \), we establish relaxed conditions under which equality in \cref{Eq:MaxDoeblinMinimalCoupling} holds, as detailed in \cref{thm: Minimal Coupling for leakage exponent}.

\section{Simultaneous Coupling for Maximal Leakage} \label{Sec: Simultaneous Coupling}
In this section, we establish a result on simultaneous couplings for a finite collection of joint PMFs. Specifically, we show that under the condition $\taumaxtwo(\cdot) \leq 1$ applied to one pair of marginal distributions, it is possible to construct a joint coupling that is minimal with respect to that pair of marginals. To formalize, consider $m$ joint probability distributions $P_{X_1, Y_{1}}, P_{X_{2}, Y_{2}}, \dots, P_{X_{m}, Y_{m}}$ over $\X \times \Y$, where the random variables $X_i \in \X$ and $Y_i \in \Y$ for $i \in [m]$ are defined on finite alphabets $\X$ and $\Y$. Let the marginals of the $Y_i$'s be denoted by $P_{Y_1}, P_{Y_2}, \dots, P_{Y_m}$. Then, our simultaneous coupling result is stated as follows:
\begin{theorem}[Simultaneous Coupling] \label{prop: Simultaneous Coupling}  
Given $m$ joint probability distributions $P_{X_1, Y_{1}},P_{X_{2}, Y_{2}}, \dots, P_{X_{m}, Y_{m}}$, assume that the marginals $P_{Y_{1}},P_{Y_{2}}, \dots, P_{Y_{m}}$ satisfy $\taumaxtwo(P_{Y_{1}},P_{Y_{2}}, \dots, P_{Y_{m}}) \leq 1$. Then, there exists a coupling $P_{X_{1}, Y_{1}, X_{2}, Y_{2}, \dots , X_{m}, Y_{m}}$ of $P_{X_1, Y_{1}}, \dots, P_{X_{m}, Y_{m}}$ such that
\begin{align} \label{eq: prop: Simultaneous Coupling}  
    \sum_{y \in \Y} \P\left( \cup_{i=1}^m \{Y_{i}=y\}\right)  =  \taumax\left(  P_{Y_{1}}, \dots , P_{Y_{m}}\right),
\end{align}  
where $\P(\cdot)$ denotes the probability law under the coupling.  
\end{theorem}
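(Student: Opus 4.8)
The plan is to build the coupling in two stages. First, I would couple the $Y$-marginals alone, optimally; the hypothesis $\taumaxtwo(P_{Y_1},\dots,P_{Y_m})\le 1$ is precisely the condition under which \cref{prop:MaxDoeblinCoupling} yields a coupling $\mu = P_{Y_1,\dots,Y_m}$ of $P_{Y_1},\dots,P_{Y_m}$ attaining equality in \cref{Eq:MaxDoeblinMinimalCoupling}, i.e.\ with $\sum_{y\in\Y}\P_\mu(\cup_{i=1}^m\{Y_i=y\}) = \taumax(P_{Y_1},\dots,P_{Y_m})$. Second, I would lift $\mu$ to a coupling of the full bivariate PMFs $P_{X_1,Y_1},\dots,P_{X_m,Y_m}$ without perturbing the joint law of $(Y_1,\dots,Y_m)$, so that the left-hand side of \cref{eq: prop: Simultaneous Coupling} is unchanged from Step~1.

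For the lifting step, for each $i$ fix a conditional PMF $P_{X_i\mid Y_i}$ consistent with $P_{X_i,Y_i}$ (defined arbitrarily at any $y\in\Y$ with $P_{Y_i}(y)=0$; this will be harmless), and define the candidate coupling $\pi$ on $(\X\times\Y)^m$ by drawing $X_1,\dots,X_m$ conditionally independently given $(Y_1,\dots,Y_m)$:
\[
  \pi(x_1,y_1,\dots,x_m,y_m)\;\triangleq\;\mu(y_1,\dots,y_m)\prod_{i=1}^m P_{X_i\mid Y_i}(x_i\mid y_i).
\]
Summing $\pi$ over all the $x_i$'s recovers $\mu$ as the joint law of $(Y_1,\dots,Y_m)$, so the $Y$-side objective is exactly $\taumax(P_{Y_1},\dots,P_{Y_m})$ by Step~1. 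It then remains to verify that $\pi$ is genuinely a coupling of the $P_{X_i,Y_i}$'s: marginalizing $\pi$ onto the $i$-th pair, the sums over $x_j$ ($j\neq i$) eliminate the factors $P_{X_j\mid Y_j}(\cdot\mid y_j)$, the sums over $y_j$ ($j\neq i$) collapse $\mu$ to its $i$-th marginal $P_{Y_i}$, and $P_{Y_i}(y_i)P_{X_i\mid Y_i}(x_i\mid y_i)=P_{X_i,Y_i}(x_i,y_i)$ — with the arbitrary choice on $\{y:P_{Y_i}(y)=0\}$ irrelevant since $\mu$, which preserves $P_{Y_i}$, puts no mass there. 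Combining, $\pi$ is the desired coupling and \cref{eq: prop: Simultaneous Coupling} holds.

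I do not expect a serious obstacle here: the construction is essentially forced, and the content of the theorem is almost entirely the existence result of \cref{prop:MaxDoeblinCoupling} applied to the $Y$-marginals. The only points requiring care are in the verification step — checking that making the $X_i$'s conditionally independent given the \emph{entire} tuple $(Y_1,\dots,Y_m)$ (rather than just given $Y_i$) still returns each bivariate marginal $P_{X_i,Y_i}$ exactly, and handling coordinates $y$ with $P_{Y_i}(y)=0$ so that the conditionals $P_{X_i\mid Y_i}$ are well defined throughout.
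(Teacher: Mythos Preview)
Your proof is correct and, for the theorem as stated, considerably simpler than the paper's. Both arguments begin the same way---invoking \cref{prop:MaxDoeblinCoupling} to obtain a minimal coupling $\mu$ of the $Y$-marginals---but diverge in the lifting step. You attach the $X_i$'s by conditional independence, $\pi(x^m,y^m)=\mu(y^m)\prod_i P_{X_i\mid Y_i}(x_i\mid y_i)$, which trivially preserves each bivariate marginal and leaves the $(Y_1,\dots,Y_m)$-law equal to $\mu$. The paper instead writes the coupling as a convex combination $c_{XY}G_1+(c_Y-c_{XY})G_2+(1-c_Y)G_3$, where $G_1$ is concentrated on the full diagonal $\{x_1=\cdots=x_m,\ y_1=\cdots=y_m\}$, $G_2$ on the $Y$-diagonal, and $G_3$ carries the residual minimal $Y$-coupling; verifying the marginals then requires a longer computation.

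What the extra work in the paper buys is structure that your conditionally-independent lift does not have: the $G_1$ component forces $\P(X_1=\cdots=X_m=x,\,Y_1=\cdots=Y_m=y)\ge P_{\min}(x,y)$, so in particular the event $\{X_1=\cdots=X_m\}$ carries mass at least $c_{XY}$. This is exactly what is exploited in \cref{corr: Bounds on Maximal Leakage in Bayesian Networks}, where the quantity $f(P_{V\cup\pa(U)\mid X})$---defined via \emph{the} simultaneous coupling of \cref{prop: Simultaneous Coupling}---is lower bounded by a Doeblin coefficient. Under your coupling, $\P(X_1=\cdots=X_m)$ can be much smaller (the conditionally-independent $X_i$'s need not coincide), so the same route to \cref{corr: Bounds on Maximal Leakage in Bayesian Networks} would not go through. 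For \cref{prop: Simultaneous Coupling} and \cref{Bounds on Maximal Leakage in Bayesian Networks} themselves, however, only existence is needed (the paper says as much just after \cref{Bounds on Maximal Leakage in Bayesian Networks}), and your argument is the cleaner one.
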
  
The proof of this theorem is provided in \cref{Proof of prop: Simultaneous Coupling}. The construction of the simultaneous coupling utilizes the minimal coupling construction of $P_{Y_{1}}, P_{Y_{2}}, \dots, P_{Y_{m}}$ while ensuring that the marginals of the joint PMFs $P_{X_{i}, Y_{i}}$ for all $i \in [m]$ are preserved. We further note that the condition $\taumaxtwo(P_{Y_{1}},P_{Y_{2}}, \dots, P_{Y_{m}}) \leq 1$ is both sufficient and necessary for $m=3$, as discussed in the proof. For larger values, specifically $m \geq 4$, we conjecture that this condition can be further relaxed using a similar recipe for obtaining the simultaneous coupling. In particular, for $m=4$, the condition can be relaxed to a more general condition as detailed in \cref{thm: Minimal Coupling for leakage exponent} (although this condition may not be tight). 
    
\section{Bounds over Bayesian Networks}\label{Sec: Contraction over Bayesian Networks} 
In this section, we establish bounds on maximal leakage over Bayesian networks. Our objective is to establish upper bounds on the leakage exponent of the composite channel from a single source node to any sink node, based on the properties of the individual channels (such as their leakage exponents) in the Bayesian network. A Bayesian network is a probabilistic graphical model represented by a directed acyclic graph, where each vertex corresponds to a random variable defined over a finite alphabet. We assume that each variable (vertex) takes values in a finite alphabet of arbitrary length. 

Consider a Bayesian network with vertex set \(\mathsf{V}\) and a source node \(X\) taking values in \(\mathcal{X} = [n]\). Each vertex \(U\), except the source node, is associated with a conditional distribution \(P_{U|\pa(U)}\), where \(\pa(U)\) denotes the set of parents of \(U\). These conditional distributions together define the joint probability distribution over the network as in \cref{BayesianNetworkFactorization} (cf.  \cite{pml2Book}). We assume that the vertices in the network are topologically sorted. Specifically, for any node \(U\) and any non-empty subset of nodes \(V \subseteq \mathsf{V}\), we use the notation \(U > V\) to indicate that there is no directed path from \(U\) to \(V\). Let \(V \subset \mathsf{V}\) be any arbitrary non-empty subset of nodes, and suppose that the random variables associated with \(V\) take values in \(\mathcal{V}\). Let \(U\) be a sink node such that \(U > V\) in the topological ordering \cite{pml2Book}. For clarity, we include an illustration of a Bayesian network in \cref{Figure 1}.

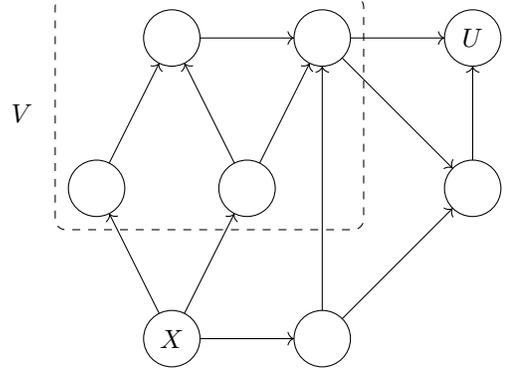
\begin{figure}
\centering
\begin{tikzpicture}[
    node distance=2cm and 2cm,
    mynode/.style={draw, circle, minimum size=0.75cm, inner sep=0pt, outer sep=0pt},
    vgroup/.style={dashed, draw, inner sep=5pt, outer sep=5pt, rounded corners},
    ]
    \node[mynode] (X) at (0,0) {$X$};
    \node[mynode] (A) at (-1,2) {};
    \node[mynode] (B) at (1,2) {};
    \node[mynode] (C) at (0,4) {};
    \node[mynode] (D) at (2,4) {};
    \node[mynode] (E) at (4,2) {};
    \node[mynode] (F) at (2,0) {};
    \node[mynode] (U) at (4,4) {$U$};
    \draw[->] (X) -- (A);
    \draw[->] (X) -- (B);
    \draw[->] (X) -- (F);
    \draw[->] (A) -- (C);
    \draw[->] (B) -- (C);
    \draw[->] (B) -- (D);
    \draw[->] (C) -- (D);
    \draw[->] (D) -- (E);
    \draw[->] (F) -- (D);
    \draw[->] (F) -- (E);
    \draw[->] (D) -- (U);
    \draw[->] (E) -- (U);
    \node[vgroup, label=left:$V$, fit=(A) (B) (C) (D)] {};
\end{tikzpicture}
\caption{Illustrative diagram of a Bayesian network. $X$ is the source node, nodes inside the dotted box belong to the set $V \subset \mathsf{V}$, and $U$ is the sink node.}
\label{Figure 1}
\end{figure}
The ensuing theorem provides bounds on the leakage exponent over a Bayesian network based on the minimal coupling and simultaneous coupling results presented in \cref{prop:MaxDoeblinCoupling} and \cref{prop: Simultaneous Coupling}. 
\begin{theorem}[Bounds on Maximal Leakage in Bayesian Networks] \label{Bounds on Maximal Leakage in Bayesian Networks}
 Let \( V \subseteq \mathsf{V} \) be a non-empty subset of nodes and let \( U \in \mathsf{V} \) be any node in $\mathsf{V}$ such that \( U > V \). Assume that the conditional distributions satisfy 
\[
\taumaxtwo(P_{U|\pa(U)}) \leq 1 \quad \text{and} \quad \taumaxtwo(P_{V|X}) \leq 1.
\]
Let the quantity \( f(P_{V \cup \pa(U) \mid X}) \) be defined as
\[
f(P_{V \cup \pa(U) \mid X}) \triangleq \sum_{v \in \mathcal{V}} \mathbb{P}\bigg( Z_1 = \dots = Z_n, \bigcup_{j=1}^n \{V_j = v\} \bigg),
\]
where $Z_i \sim P_{\pa(U)|X= i} $, $V_i \sim P_{V|X= i}$, and $\P(\cdot)$ is the probability law with respect to the simultaneous coupling of $P_{V_1,Z_1}, \dots ,P_{V_n,Z_n} $ (in \cref{prop: Simultaneous Coupling}). Then, the leakage exponent of the composite channel \( P_{V \cup \{U\} \mid X} \) satisfies the following bound:
\[
\begin{aligned}
\tau_{\max}(P_{V \cup \{U\} \mid X}) &\leq \tau_{\max}(P_{U|\pa(U)}) \cdot \tau_{\max}(P_{V \mid X}) \\
&\  - \left( \tau_{\max}(P_{U|\pa(U)}) - 1 \right) \cdot f(P_{V \cup \pa(U) \mid X}).
\end{aligned}
\]

\end{theorem}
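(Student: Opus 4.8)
\emph{Overall plan.} The strategy is to apply the minimal coupling bound of \cref{prop:MaxDoeblinCoupling} to the composite channel $P_{V \cup \{U\} \mid X}$ evaluated at a \emph{carefully chosen} coupling of its output rows $(V_i,U_i) \sim P_{V,U\mid X=i}$, $i \in [n]$, obtained by augmenting the simultaneous coupling of \cref{prop: Simultaneous Coupling}. The structural input is that $U > V$, so every node in $V$ (and the source $X$) is a non-descendant of $U$; the local Markov property of the Bayesian network then gives $P_{U\mid \pa(U),V,X} = P_{U\mid\pa(U)}$, whence for all $i$ and all $(v,u)$,
\[
P_{V,U\mid X=i}(v,u) \;=\; \sum_{w} P_{V,\pa(U)\mid X=i}(v,w)\,P_{U\mid\pa(U)}(u\mid w).
\]

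\emph{Constructing the coupling.} First apply \cref{prop: Simultaneous Coupling} to the joint laws $P_{V_i,Z_i} \triangleq P_{V,\pa(U)\mid X=i}$, with the $V_i$'s playing the role of the marginals to be minimally coupled; since $\taumaxtwo(P_{V\mid X}) \le 1$, this produces a coupling $\P$ of $(V_i,Z_i)_{i=1}^n$ with $\sum_{v}\P\!\left(\cup_{i}\{V_i=v\}\right) = \taumax(P_{V\mid X})$ and with each $(V_i,Z_i)\sim P_{V,\pa(U)\mid X=i}$. Next, conditionally on $(V_i,Z_i)_{i=1}^n$, draw $(U_i)_{i=1}^n$ as follows: for each realized value $v^{\ast}$ form the index group $I_{v^{\ast}}=\{i:V_i=v^{\ast}\}$, and within $I_{v^{\ast}}$ couple the $U_i$'s by (i) forcing $U_i=U_j$ whenever $P_{U\mid\pa(U)=Z_i}=P_{U\mid\pa(U)=Z_j}$, and (ii) coupling the resulting distinct conditionals---which constitute a set of distinct rows of $P_{U\mid\pa(U)}$, hence satisfy $\taumaxtwo(\cdot)\le\taumaxtwo(P_{U\mid\pa(U)})\le 1$---by the minimal coupling of \cref{prop:MaxDoeblinCoupling}, which is then tight. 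Because couplings preserve marginals, $U_i \mid (V_i,Z_i) \sim P_{U\mid\pa(U)=Z_i}$, and together with the displayed factorization this certifies that $(V_i,U_i)\sim P_{V,U\mid X=i}$, so $(V_i,U_i)_{i=1}^n$ is a legitimate coupling of the rows of $P_{V\cup\{U\}\mid X}$.

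\emph{The counting argument.} It is convenient to express each relevant quantity as an expected number of distinct values under $\P$. Writing $N_V=|\{V_i:i\in[n]\}|$, $N_Z=|\{Z_i:i\in[n]\}|$, and $N_{VU}=|\{(V_i,U_i):i\in[n]\}|$, we have $\taumax(P_{V\mid X})=\E[N_V]$, $f(P_{V\cup\pa(U)\mid X})=\E[N_V\,\I\{N_Z=1\}]$, and---by the general inequality behind \cref{prop:MaxDoeblinCoupling} applied with output $(V,U)$---$\taumax(P_{V\cup\{U\}\mid X})\le \sum_{v,u}\P\!\left(\cup_i\{V_i=v,U_i=u\}\right)=\E[N_{VU}]$. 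Since $N_{VU}=\sum_{v^{\ast}\in\{V_i\}}|\{U_i:i\in I_{v^{\ast}}\}|$ and the within-group coupling is a minimal coupling of a sub-channel of $P_{U\mid\pa(U)}$, \cref{prop:MaxDoeblinCoupling} gives $\E\big[\,|\{U_i:i\in I_{v^{\ast}}\}|\ \big|\ (V_i,Z_i)_i\big]=\taumax\!\big(\{P_{U\mid\pa(U)=Z_i}:i\in I_{v^{\ast}}\}\big)\le \taumax(P_{U\mid\pa(U)})$; moreover, on the event $\{N_Z=1\}$ every group carries a single conditional, so $N_{VU}=N_V$. Hence $\E[N_{VU}\mid (V_i,Z_i)_i]\le N_V$ on $\{N_Z=1\}$ and $\le \taumax(P_{U\mid\pa(U)})\,N_V$ always; taking expectations,
\[
\E[N_{VU}] \le \taumax(P_{U\mid\pa(U)})\,\E[N_V] - \big(\taumax(P_{U\mid\pa(U)})-1\big)\,\E[N_V\,\I\{N_Z=1\}],
\]
and substituting the three identities above recovers exactly the claimed bound.

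\emph{Main obstacle.} The delicate point is step (ii) of the within-group construction: one must first merge indices carrying identical conditionals, because a repeated row can push $\taumaxtwo$ of a sub-collection above $1$ and thereby destroy tightness of the minimal coupling; after merging, the sub-collection consists of distinct rows of $P_{U\mid\pa(U)}$, and monotonicity of the second-largest order statistic under passing to sub-multisets transfers $\taumaxtwo(P_{U\mid\pa(U)})\le 1$ to it while simultaneously guaranteeing $\taumax(\cdot)\le\taumax(P_{U\mid\pa(U)})$. The other step requiring care is checking that the augmented coupling genuinely has marginals $P_{V,U\mid X=i}$, which is precisely where the Bayesian-network identity $P_{U\mid\pa(U),V,X}=P_{U\mid\pa(U)}$---valid exactly because $U>V$---is invoked.
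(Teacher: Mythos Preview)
Your proof is correct and rests on the same core idea as the paper's: build the simultaneous coupling of $(V_i,Z_i)_{i=1}^n$ from \cref{prop: Simultaneous Coupling}, augment it by coupling the $U_i$'s through the channel $P_{U\mid\pa(U)}$, and exploit that on $\{Z_1=\dots=Z_n\}$ the $U$-step contributes no extra leakage. The execution differs in two respects worth noting. First, your bookkeeping is via expected counts $N_V$, $N_Z$, $N_{VU}$ (using $\sum_{v,u}\P(\cup_i\{V_i=v,U_i=u\})=\E[N_{VU}]$), whereas the paper conditions on the events $\mathcal{A}_v=\cup_j\{V_j=v\}$ and works with conditional probabilities; your formulation is compact and makes the combinatorics transparent. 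Second, you couple the $U_i$'s \emph{separately within each $V$-group} and bound $\E[N_{VU}]$ directly, while the paper couples all $U_i$'s \emph{globally} given $Z^n$ (so $(V^n)\to(Z^n)\to(U^n)$ is Markov), bounds the larger quantity $\sum_{v,u}\P(\cup_i\{U_i=u\},\cup_j\{V_j=v\})$, and only then uses $\P(\cup_i\{U_i=u,V_i=v\})\le\P(\cup_i\{U_i=u\},\cup_j\{V_j=v\})$ to reach $\taumax(P_{V,U\mid X})$. Both routes land on the same inequality. Your explicit merging step for repeated conditionals is a detail the paper leaves implicit; it is indeed what makes \cref{prop:MaxDoeblinCoupling} apply with equality to the sub-collection, since a multiset with a repeated row has $\taumaxtwo\geq 1$.
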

\begin{proof}[Proof]
        Let $Z= \pa(U)$ and let its alphabet set be denoted by $\mathcal{Z}$. Let $Q_{i}$ denote the conditional distribution of $V ,Z$ given $X = {i}$, i.e., $Q_{i}=P_{V,Z \mid X={i}}$ for  $i \in [n]$. Further, we  define the random variables $\left(V_{i}, Z_{i}\right) \sim Q_{i}$ for $i \in [n]$. Since $\taumaxtwo(P_{V|X}) \leq 1$, there exists a simultaneous coupling $P_{V_1, Z_{1}, \dots ,V_{n},Z_{n}}$ of joint PMFs $Q_{1}, \dots, Q_{n}$. Let $\P(\cdot)$ denote the corresponding probability operator under the simultaneous coupling, and we have that
    \begin{equation*}
        \begin{aligned}
            \sum_{v \in \mathcal{V}} \P( \cup_{i=1}^n \{V_{i}= v\}) & =\taumax\left(P_{V \mid X}\right).
        \end{aligned}
    \end{equation*}
    Conditioned on $Z_1 = z_{1}, \ldots, Z_n = z_{n}$ for $z_1,\dots,z_n \in \Z$, define the random variables $U_{i} \sim P_{U \mid Z =z_{i}}$. We now construct a minimal coupling of conditional PMFs $P_{U \mid Z=z_{1}}, \dots , P_{U \mid Z=z_{n}}$, and let the random variables $(U_{1}, U_{2}, \ldots ,U_{n})$ be distributed according to this coupling (conditioned on $Z_1 = z_{1}, \ldots, Z_n = z_{n}$). Then, the minimal coupling of $V_{1}, Z_{1}, V_{2}, Z_{2}, \ldots, V_{n}, Z_n$ (earlier) along with the minimal coupling of $U_{1}, U_{2}, \ldots, U_{n}$ define the joint distribution $P_{V_1, U_{1}, Z_{1}, \ldots ,V_{n}, Z_{n}, U_{n}}$, which satisfies the Markov relation
    \begin{equation*}
        (V_{1}, V_{2}, \dots ,V_{n}) \longrightarrow (Z_{1}, Z_{2}, \ldots ,Z_{n}) \longrightarrow (U_{1}, U_{2}, \ldots ,U_{n}).
    \end{equation*}
    Also, by our assumption $\taumaxtwo(P_{U|Z}) \leq 1$, we have
    \begin{equation*}
    \begin{aligned}
         & \sum_{u \in \U }\mathbb{P}(\cup_{i=1}^n \{U_{i}= u\} \mid Z_{1} =z_1, \ldots, Z_{n} = z_n) -1 \\ 
         & = \taumax\left(P_{U \mid Z=z_1}, \ldots, P_{U \mid Z=z_{n}}\right) -1\\
        & \leq\left(\taumax\left(P_{U \mid Z}\right) - 1\right) \I_{\{z_1=\dots= z_n \}^\complement},
    \end{aligned}
    \end{equation*}
    where $\I_{\mathcal{A}}$ denotes the indicator function of $\mathcal{A}$, $\mathcal{B}^\complement$ denotes the complement of the set $\mathcal{B}$, and the inequality holds since the alphabet $\mathcal{Z}$ may be larger than $\left\{z_1, \ldots, z_{n}\right\}$ and the two bounds are equal when $\{z_1=\dots=z_n \}$ occurs. Now, for any fixed $v$ taking expectation conditioned on the set $\mathcal{A}_v =  \{\cup_{j=1}^n \{V_{j}= v\}\}$ on both sides, gives
    $$
    \begin{aligned}
         & \sum_{u \in \U} \E[\mathbb{P}(\cup_{i=1}^n \{U_{i}= u\} | Z_{1} =z_1, \ldots, Z_{n} = z_n)| \mathcal{A}_v ]  -1 \\
         & \leq (\taumax\left(P_{U \mid Z}\right) -1) \P(\{Z_1=\dots= Z_n \}^\complement |\mathcal{A}_v), 
    \end{aligned}
    $$
    which simplifies to
    $$
    \begin{aligned}
&        \sum_{u \in \U}  \mathbb{P}( \cup_{i=1}^n \{U_{i}= u\}| \mathcal{A}_v) - 1 \\
&      \leq (\taumax\left(P_{U \mid Z}\right) - 1) (1 - \P(\{Z_1=\dots= Z_n \} | \mathcal{A}_v )) .  
    \end{aligned}
    $$
    Now, multiplying both sides by $\mathbb{P}\left(\cup_{j=1}^n \{V_{j}= v\}\right)$ gives  
        \begin{align}
       \nonumber    &\sum_{u \in \U} \P\left( \cup_{i=1}^n \{U_{i}= u\} , \cup_{j=1}^n \{V_{j}= v\}\right)  -  \P\left(\cup_{j=1}^n \{V_{j}= v\}\right)\\ 
       \nonumber  & \leq \left(\taumax\left(P_{U | Z}\right) - 1\right) (\P(\cup_{j=1}^n \{V_{j}= v\})  - \\ 
         & \ \ \ \ \ \ \ \ \ \ \ \ \ \ \ \ \ \  \P(Z_1=\dots = Z_n, \cup_{j=1}^n \{V_{j}= v\})). \label{almost there bound}
        \end{align}
    Now observe that  
   $$ 
   \begin{aligned}
&   \sum_{v \in \V} \sum_{u \in \U} \P\left( \cup_{i=1}^n \{U_{i}= u\} , \cup_{j=1}^n \{V_{j}= v\}\right) \geq \\
&    \sum_{v \in \V} \sum_{u \in \U} \P\left( \cup_{i=1}^n \{U_{i}= u, V_{i}= v\}\right) \geq \taumax(P_{U,V |X}) 
   \end{aligned}
    $$
   Finally, taking the sum over all $v \in \calV $ on both sides of \cref{almost there bound} and utilizing the above bound and the construction of the simultaneously maximal coupling, we have 
    \begin{equation*}
    \begin{aligned}
       & \taumax\left(P_{V, U \mid X}\right) \leq \taumax\left(P_{V \mid X}\right)   + (\taumax(P_{U|Z}) -1) \times \\
       & \bigg(\taumax\left(P_{V \mid X}\right) - \sum_{v \in \V}  \P(Z_1=\dots = Z_n, \cup_{j=1}^n \{V_{j}= v\}) \bigg) . 
    \end{aligned}
    \end{equation*}
    Using the definition of $f(\cdot)$ yields the desired result. 
\end{proof}

The theorem highlights that we can iteratively ``peel off'' a node (or sets of nodes) in the Bayesian network to obtain an upper bound on the leakage exponent of the composite channel. (Note that when the graph is sparser, the number of recursive steps needed is smaller, which can imply tighter bounds.) The existence of a simultaneous coupling for the joint PMFs $P_{V_1, Z_1}, \dots, P_{V_n, Z_n} $ is central to our approach and is guaranteed by the assumption \(\taumaxtwo(P_{V|X}) \leq 1\). While the condition may appear restrictive, it arises from the lack of a general characterization when couplings achieving equality in \cref{Eq:MaxDoeblinMinimalCoupling} and \cref{eq: prop: Simultaneous Coupling} exist for a given set of PMFs. Importantly, the proof of \cref{Bounds on Maximal Leakage in Bayesian Networks} does not rely on the structure of the coupling but only on its existence and that it achieves the bound in \cref{Eq:MaxDoeblinMinimalCoupling}. Furthermore, the last term \( f(P_{V \cup \pa(U) \mid X}) \) quantifies the ``loss in leakage'' when the parents $Z_1, \dots, Z_n$ of $U$ all become equal (conditioned on $X$). We can utilize the structure of the simultaneous coupling construction presented in \cref{prop: Simultaneous Coupling} to obtain a simplified bound in terms of the Doeblin coefficient as presented below.
\begin{corollary}[Simplified Bounds on Leakage Exponent] \label{corr: Bounds on Maximal Leakage in Bayesian Networks}
    Consider the setting in \cref{Bounds on Maximal Leakage in Bayesian Networks} and assume that the conditional distributions satisfy $\taumaxtwo(P_{U|\pa(U)}) \leq 1$  and $\taumaxtwo(P_{V|X}) \leq 1$. Then, the leakage exponent of the composite channel \( P_{V \cup \{U\} \mid X} \) satisfies the following bound:
    \[
    \begin{aligned}
    \tau_{\max}( P_{V \cup \{U\} \mid X}&) \leq \tau_{\max}(P_{U|\pa(U)}) \cdot \tau_{\max}(P_{V \mid X}) \\
    &\  - \left( \tau_{\max}(P_{U|\pa(U)}) - 1 \right) \cdot \tau(P_{V \cup \pa(U) \mid X}),
    \end{aligned}
    \]
    where \( \tau(\PYX) \) is the Doeblin coefficient of the channel \( \PYX \), defined as
    \[
    \tau(\PYX) \triangleq \sum_{y \in \Y}  \min_{x \in \X} \PYX(y|x).
    \]
\end{corollary}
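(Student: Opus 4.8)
The plan is to derive the corollary directly from \cref{Bounds on Maximal Leakage in Bayesian Networks} by lower bounding the term $f(P_{V\cup\pa(U)\mid X})$ appearing there. First observe that $\tau_{\max}(P_{U\mid\pa(U)})\geq 1$, since any leakage exponent satisfies $\sum_y\max_i P_i(y)\geq\sum_y P_{i_0}(y)=1$; hence the factor $\tau_{\max}(P_{U\mid\pa(U)})-1$ multiplying $f$ is nonnegative, and replacing $f$ by any smaller quantity only enlarges (i.e., weakens) the right-hand side of the theorem's bound. Therefore it suffices to prove the single inequality
\[
f(P_{V\cup\pa(U)\mid X})\;\geq\;\tau(P_{V\cup\pa(U)\mid X})=\sum_{v\in\calV}\sum_{z\in\Z}\min_{i\in[n]}Q_i(v,z),
\]
where $Q_i=P_{V,Z\mid X=i}$ and $Z=\pa(U)$, after which the corollary is immediate.

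To establish this I would first pass from $f$ to a coarser sum of diagonal probabilities. With $(V_i,Z_i)\sim Q_i$ coupled as in \cref{prop: Simultaneous Coupling}, the events $\{(V_1,Z_1)=\dots=(V_n,Z_n)=(v,z)\}$, indexed by $(v,z)\in\calV\times\Z$, are pairwise disjoint and each one is contained in $\{Z_1=\dots=Z_n\}\cap\bigl(\cup_j\{V_j=v\}\bigr)$. Summing their probabilities therefore gives
\[
f(P_{V\cup\pa(U)\mid X})\;\geq\;\sum_{v\in\calV}\sum_{z\in\Z}\P\bigl((V_1,Z_1)=\dots=(V_n,Z_n)=(v,z)\bigr),
\]
so it is enough to show that the right-hand side is at least $\sum_{v,z}\min_i Q_i(v,z)$.

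For the final step I would use the explicit construction underlying \cref{prop: Simultaneous Coupling}. It couples $P_{V_1},\dots,P_{V_n}$ via a minimal coupling that also makes the common part maximal, i.e., $\P(V_1=\dots=V_n=v)=\min_k P_{V_k}(v)=:c_v$ for every $v$ (this is exactly where $\taumaxtwo(P_{V\mid X})\leq 1$ enters), and then attaches the $Z_i$'s so that each joint law $Q_i$ is preserved. On the common block $\{V_1=\dots=V_n=v\}$, of mass $c_v$, the $Z_i$-allocation is a vector $\nu_i^v\geq 0$ with $\nu_i^v(z)\leq Q_i(v,z)$ and $\sum_z\nu_i^v(z)=c_v$, and one is free to choose it. Since $\min_j Q_j(v,z)\leq Q_i(v,z)$ and, crucially,
\[
\sum_z\min_j Q_j(v,z)\;\leq\;\min_k\sum_z Q_k(v,z)\;=\;c_v,
\]
one can take $\nu_i^v(z)\geq\min_j Q_j(v,z)$ for all $i$ and $z$. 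With this choice the $Z_i$'s on the common block can be coupled to put mass $\min_j Q_j(v,z)$ on each diagonal $\{Z_1=\dots=Z_n=z\}$, the leftover mass being coupled arbitrarily, which yields $\P((V_i,Z_i)=(v,z)\ \forall i)\geq\min_i Q_i(v,z)$. Summing over $z$ and then over $v$ completes the argument.

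The main obstacle is this last step: one must verify that the simultaneous coupling of \cref{prop: Simultaneous Coupling} can be taken so that (i) the common $V$-block has the maximal mass $\min_k P_{V_k}(v)$, and (ii) the $Z$-attachment on that block is diagonal-maximal, all while still preserving the conditional laws $Q_i$ off the common block. The enabling fact is the elementary inequality displayed above, which ensures that the maximal common $V$-block always has enough room to host the full Doeblin overlap $\sum_z\min_i Q_i(v,z)$ of the joint conditionals.
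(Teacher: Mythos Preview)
Your approach is correct and is essentially the one the paper has in mind: deduce the corollary from \cref{Bounds on Maximal Leakage in Bayesian Networks} by showing $f(P_{V\cup\pa(U)\mid X})\geq \tau(P_{V\cup\pa(U)\mid X})$, using the structure of the simultaneous coupling. Your reduction $f\geq \sum_{v,z}\P\bigl((V_i,Z_i)=(v,z)\ \forall i\bigr)$ is clean and exactly right.

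Where you work harder than necessary is in the final step. You argue abstractly that on the common $V$-block one is ``free to choose'' the $Z$-attachment to be diagonal-maximal, and then worry about compatibility with the off-block marginals. But the explicit construction in \cref{Proof of prop: Simultaneous Coupling} already does this for you: the component $G_1$ (with weight $c_{XY}$, which in the present notation is $\sum_{v,z}\min_i Q_i(v,z)$) is supported on the full $(V,Z)$-diagonal and places mass exactly $\min_i Q_i(v,z)$ at each $(v,z)$. Since $G_2$ and $G_3$ contribute nonnegatively, one reads off immediately
\[
\P\bigl((V_i,Z_i)=(v,z)\ \forall i\bigr)\ \geq\ \min_i Q_i(v,z),
\]
and summing gives the Doeblin coefficient. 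So the ``main obstacle'' you flag is not an obstacle at all once you look at $G_1$; no re-engineering of the coupling is needed. (Incidentally, your claim $\P(V_1=\dots=V_n=v)=\min_k P_{V_k}(v)$ need not be an equality for the paper's construction, only $\geq$, but your argument never actually uses the equality.)
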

This corollary is a consequence of the structure of our simultaneous coupling in \cref{prop: Simultaneous Coupling}. The resulting bound depends on the Doeblin coefficient of the composite channel $\tau(P_{V \cup \pa(U) \mid X})$, which can be lower bounded by using the Doeblin coefficient of the individual channels as demonstrated in \cite[Theorem 6]{MakurSingh2024}. Consequently, our bound in \cref{corr: Bounds on Maximal Leakage in Bayesian Networks} parallels existing bounds on Bayesian networks for contraction coefficients of the Kullback-Leibler divergence and total variation distance established in  \cite{Polyanskiy2017}, as well as for Doeblin coefficients in \cite{MakurSingh2024,MakurSingh2023a}.

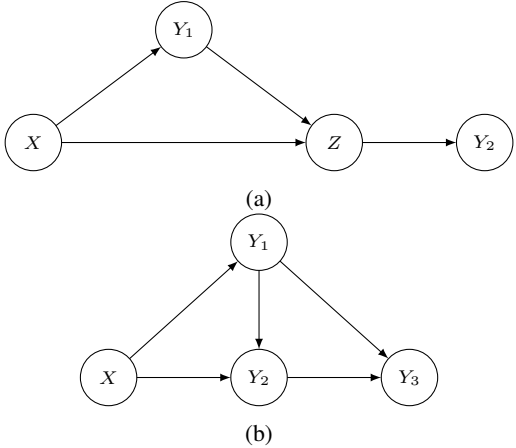
\begin{figure}[!t]
\centering

\begin{minipage}{\columnwidth}
    \centering
    \begin{tikzpicture}[
        node distance=2cm and 2cm,
        mynode/.style={draw, circle, minimum size=0.75cm, inner sep=0pt, outer sep=0pt},
        every node/.append style={font=\scriptsize},
        >={latex[length=2mm, width=1.5mm]}
    ]
        \node[mynode] (X) at (0,0) {$X$};
        \node[mynode] (Y1) at (2,1.5) {$Y_1$};
        \node[mynode] (Z) at (4,0) {$Z$};
        \node[mynode] (Y2) at (6,0) {$Y_2$};
        \draw[->] (X) -- (Y1);
        \draw[->] (Y1) -- (Z);
        \draw[->] (X) -- (Z);
        \draw[->] (Z) -- (Y2);
    \end{tikzpicture}
    \subcaption{ }
    \label{fig:example1}
\end{minipage}

\begin{minipage}{\columnwidth}
    \centering
    \begin{tikzpicture}[
        node distance=2cm and 2cm,
        mynode/.style={draw, circle, minimum size=0.75cm, inner sep=0pt, outer sep=0pt},
        every node/.append style={font=\scriptsize},
        >={latex[length=2mm, width=1.5mm]}
    ]
        \node[mynode] (X) at (0,0) {$X$};
        \node[mynode] (Y1) at (2,1.8) {$Y_1$};
        \node[mynode] (Y2) at (2,0) {$Y_2$};
        \node[mynode] (Y3) at (4,0) {$Y_3$};
        \draw[->] (X) -- (Y1);
        \draw[->] (X) -- (Y2);
        \draw[->] (Y1) -- (Y2);
        \draw[->] (Y2) -- (Y3);
        \draw[->] (Y1) -- (Y3);
    \end{tikzpicture}
    \subcaption{ }
    \label{fig:example2}
\end{minipage}
\vspace{-0.25cm}

\caption{Examples of Bayesian networks.}
\label{fig:allexamples}
\end{figure}

\subsection{Application of Bounds to Simple  Bayesian Networks}
We will now apply the bounds established in \cref{corr: Bounds on Maximal Leakage in Bayesian Networks} to different examples of Bayesian networks to illustrate their effectiveness. 
\subsubsection*{Example 1} Consider the Bayesian network shown in \cref{fig:example1}. Using \cref{corr: Bounds on Maximal Leakage in Bayesian Networks} under the assumption that $\taumaxtwo(P_{Y_2|Z} )\leq 1$ and $\taumaxtwo(P_{Y_1|X} )\leq 1$, the bound on maximal leakage is given by:
\begin{equation}
\begin{aligned}
\taumax(P_{Y_1, Y_2 \mid X}) &\leq   \taumax(P_{Y_2 \mid Z}) \taumax(P_{Y_1 \mid X}) \\
& - \big(\taumax(P_{Y_2 \mid Z}) - 1\big) \tau(P_{Y_1, Z \mid X}).
\end{aligned}
\end{equation}
Taking the logarithm on both sides, we obtain
\vspace{-1mm}
\begin{equation} \label{first example bound}
    \begin{aligned}
\L(X & \rightarrow  (Y_1, Y_2))  \leq \L(X \rightarrow Y_1) + \L(Z \rightarrow Y_2) \\
&+ \log \bigg( 1-  \frac{(\taumax(P_{Y_2 \mid Z})  - 1)  }{ \taumax(P_{Y_2 \mid Z}) } \frac{ \tau(P_{Y_1, Z \mid X} ) }{ \taumax(P_{Y_1|Z} ) } \bigg).  
\end{aligned}
\end{equation}
Observe that when the Doeblin coefficient \(\tau(P_{Y_1, Z \mid X}) >0\), the bound in \cref{first example bound} improves upon the sub-additivity property in \cite{issa2016}. Intuitively, the improvement arises because the bound accounts for the reduction in leakage due to the ``erasure of information by the Doeblin component'' \cite{GohariGunluKramer2020}, which is precisely what our proof technique for \cref{Bounds on Maximal Leakage in Bayesian Networks} aims to capture. Furthermore, this bound can be applied recursively, provided the required assumptions hold at each step.

\subsubsection*{Example 2}  
Applying \cref{corr: Bounds on Maximal Leakage in Bayesian Networks} to the Bayesian network in \cref{fig:example2}, we obtain the bound
\begin{align}
\vspace{-2mm}
\taumax(& P_{(Y_1, Y_2, Y_3)  \mid X}) \leq \taumax(P_{(Y_1,Y_2) \mid X}) \cdot \taumax(P_{Y_3 \mid (Y_1, Y_2)}) \nonumber \\ 
& -  (\taumax(P_{Y_3 \mid (Y_1, Y_2)}) - 1) \cdot \tau(P_{(Y_1,Y_2) \mid X}),       
\end{align}
under the assumption that $\taumaxtwo(P_{Y_3 \mid (Y_1, Y_2)}) \leq 1$ and $\taumaxtwo(P_{ (Y_1, Y_2)|X}) \leq 1$. Applying \cref{corr: Bounds on Maximal Leakage in Bayesian Networks}, we can further bound $\taumax(P_{(Y_1,Y_2) \mid X})$ as
\begin{equation}
    \taumax(P_{(Y_1, Y_2)|X}) \leq \taumax(P_{ Y_1 |X}) \taumax(P_{Y_2|(X, Y_1) }),
\end{equation}
 assuming $\taumaxtwo(P_{Y_1 \mid X}) \leq 1$ and $\taumaxtwo(P_{ Y_2| (X, Y_1)}) \leq 1$.

\section{Minimal Coupling Construction for Leakage Exponent for $n = 4$}\label{sec: Minimal Coupling for Leakage Exponent for n4}
In this section, we extend the known conditions beyond $\taumaxtwo(\PYX) \leq 1$ under which the coupling achieves the lower bound in \cref{Eq:MaxDoeblinMinimalCoupling} exists for the special case of $n=4$, where $n = |\X|$. Subsequently, we employ this coupling construction to directly extend the previously presented results. 
\begin{theorem}[Coupling Construction for $n=4$] \label{thm: Minimal Coupling for leakage exponent}
For $n=4$, let $P_1,P_2, P_3, P_4 \in \calP_\Y$ be the given PMFs. For $i \neq j$ and $i,j \in [4]$, let $P_{\min}(y) = \min\{ P_1(y),P_2(y),P_3(y),P_4(y) \}$ and define $N_{ij}$ as
\begin{align*}
& N_{ij} \triangleq \sum_{y \in \mathcal{Y}} \bigg( \min\{ P_i(y), P_j(y) \} - \nonumber \\ 
& \ \ \ \ \ \ \       \sum_{k \in [4] \setminus \{i,j\}}\!\!\!\! \min\{ P_i(y), P_j(y), P_k(y) \} + P_{\min}(y) \bigg).
\end{align*}
Then, there exists a coupling for $n=4$ that achieves the lower bound in \cref{Eq:MaxDoeblinMinimalCoupling}, provided the following condition holds:
\begin{align} \nonumber
\min\{ N_{12}, N_{34} \} + \min\{& N_{13}, N_{24} \}  + \min\{ N_{14}, N_{23} \} \\
& \geq \taumaxtwo(P_1,P_2,P_3,P_4) -1. \label{eq:final_condition}
\end{align}
\end{theorem}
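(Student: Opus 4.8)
The plan is to construct the coupling of $P_1,P_2,P_3,P_4$ explicitly by a "greedy layering" scheme, peeling off mass in decreasing order of multiplicity, and then to verify that the condition \cref{eq:final_condition} is exactly what is needed to place the leftover mass so that the coupling attains the lower bound $\taumax(P_1,P_2,P_3,P_4) = \sum_y \max\{P_1(y),\dots,P_4(y)\}$. Recall from \cref{lower bound for general n} that any coupling satisfies $\sum_y \P(\cup_i\{Y_i=y\}) \geq \taumax$, so it suffices to exhibit one coupling with $\sum_y \P(\cup_i\{Y_i=y\}) \leq \taumax$. Equivalently, writing each $y$-coordinate of the coupling as a distribution over the $2^4-1$ nonempty "patterns" $S\subseteq[4]$ (where pattern $S$ means $Y_i=y$ for exactly $i\in S$), the requirement $\sum_y \P(\cup_i\{Y_i=y\})=\taumax$ forces that for every $y$ we never "waste" coincidences: the total probability mass of patterns touching $y$ equals $\max_i P_i(y)$, while the marginal constraints demand that pattern-mass summed over patterns containing $i$ equals $P_i(y)$. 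So for each fixed $y$ we have a small transportation feasibility problem, and the theorem amounts to showing global feasibility of stitching these together.

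**First**, for each $y$ I would allocate the "forced" mass: put $P_{\min}(y)$ on the all-four pattern $\{1,2,3,4\}$; then for each triple, put the residual $\min$-over-the-triple minus $P_{\min}(y)$ on that triple's pattern; then for each pair $\{i,j\}$ the residual after removing what has already been assigned to triples and the quadruple is exactly $\min\{P_i(y),P_j(y)\} - \sum_{k\notin\{i,j\}}\min\{P_i,P_j,P_k\}(y) + P_{\min}(y)$, call it $n_{ij}(y)$ (so $N_{ij}=\sum_y n_{ij}(y)$ by the definition in the statement — note $n_{ij}(y)\geq 0$ by inclusion–exclusion on the three quantities $\min_{ij}, \min_{ijk}, \min_{ijl}$). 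The remaining unassigned mass at $i$ after pairs/triples/quadruple is $P_i(y) - \sum_{j\neq i} \min\{P_i,P_j\}(y) + \sum_{j<k,\, i\notin\{j,k\}}\cdots$, which simplifies to a nonnegative quantity that I will place on the singleton pattern $\{i\}$; summing, $\sum_y(\text{singleton-}i\text{ mass}) $ relates to $\taumax$ and $\taumaxtwo$. The point of this layering is that it automatically preserves all four marginals and, for each $y$, achieves $\P(\cup_i\{Y_i=y\})=\max_i P_i(y)$ — \emph{except} that we have not yet used up the pair-masses $n_{ij}(y)$ consistently; the pair patterns $\{1,2\},\{3,4\}$ are disjoint, as are $\{1,3\},\{2,4\}$ and $\{1,4\},\{2,3\}$, and there is slack to "merge" a pair pattern with its complementary pair pattern into the quadruple pattern without changing any marginal, as long as the two have equal mass at that $y$.

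**The key step** is this merging/rebalancing: at the level of totals, if I can transport $\min\{N_{12},N_{34}\}$ worth of (pattern $\{1,2\}$ $\times$ pattern $\{3,4\}$) mass into pattern $\{1,2,3,4\}$ — and similarly for the other two complementary pair-classes — then the number of distinct-$y$ "hits" I save is exactly $\min\{N_{12},N_{34}\}+\min\{N_{13},N_{24}\}+\min\{N_{14},N_{23}\}$, because each such merge turns a configuration that previously contributed $2$ to $\sum_y\P(\cup_i\{Y_i=y\})$-style counting into one contributing $1$. Carrying this out requires moving mass \emph{across different values of $y$} (we have freedom in which $y$'s to pair up since only the marginal $P_i(\cdot)$ over all of $\mathcal{Y}$ is constrained, not the joint pattern at each individual $y$), and a careful bookkeeping argument — essentially a max-flow / Hall-type matching on the bipartite structure (mass with pattern $\{i,j\}$) vs. (mass with pattern $\{k,l\}$) — shows the merge of total size $\min\{N_{ij},N_{kl}\}$ is achievable. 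After all three merges, the total "excess hits" remaining is $\taumaxtwo(P_1,P_2,P_3,P_4) - 1 - \big(\min\{N_{12},N_{34}\}+\min\{N_{13},N_{24}\}+\min\{N_{14},N_{23}\}\big)$ up to the singleton contributions, and condition \cref{eq:final_condition} is precisely the statement that this quantity is $\leq 0$, i.e., no excess remains and the lower bound is met.

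**The main obstacle** I anticipate is the cross-$y$ rebalancing: unlike the single-$y$ transportation problems, the merges interact (pattern $\{1,2,3,4\}$ receives mass from all three merge operations, pattern $\{1,2\}$ participates in only one merge but its mass at a given $y$ is entangled with the triple allocations at that $y$), so I must check that performing the three merges in sequence never requires more pair-mass than is available and never disturbs a marginal. I expect to handle this by proving a feasibility lemma: the collection of constraints "marginals $=P_i$", "per-$y$ hit-count $=\max_i P_i(y)$ after merging", and "merge-sizes $=\min\{N_{ij},N_{kl}\}$" forms a feasible linear program iff \cref{eq:final_condition} holds, arguing feasibility via an explicit flow decomposition rather than LP duality to keep the construction concrete — which also has the bonus of yielding the coupling, not merely its existence. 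A secondary technical point is confirming the inclusion–exclusion simplifications keep all the layer-masses nonnegative for every $y$ (this uses submodularity of $y\mapsto\min$ of subsets, i.e., $\min\{a,b\}+\min\{a,b,c,d\}\geq \min\{a,b,c\}+\min\{a,b,d\}$), which I would dispatch as a short preliminary lemma before the main construction.
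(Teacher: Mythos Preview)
Your layered scheme is essentially the paper's construction, and the key insight (match complementary pair-blocks to absorb the $\taumaxtwo-1$ excess) is right. But two points in your write-up would cause trouble if followed literally.

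First, the layering you describe assigns mass to ``pieces'' $(S,y)$ with $S\subseteq[4]$, but this is not yet a coupling on $\Y^4$: an atom of the coupling is a full tuple $(y_1,y_2,y_3,y_4)$, so every triple piece $(\{i,j,k\},y)$ must be stitched to a singleton piece $(\{l\},y')$, and every pair piece to either two singletons or a complementary pair. You relegate this to a ``max-flow / Hall-type matching'', but that is where the entire difficulty lives. The paper makes this explicit by normalizing the singleton leftover at index $i$ into a PMF $R_i$ and forming \emph{product} couplings: for instance, the triple component is $R_1(y_1)\cdot\frac{\min\{P_2,P_3,P_4\}(y)-P_{\min}(y)}{\tau_{234}-\tau}\,\I_{y_2=y_3=y_4=y}$, and similarly the pair-with-two-singletons component uses $R_i(y_i)R_j(y_j)\cdot T_{kl}(y)/N_{kl}$. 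Without specifying something like this, your ``place on the singleton pattern'' step leaves the singleton masses uncoupled, and your hit-count bookkeeping (``previously contributed $2$\ldots into one contributing $1$'') cannot be made precise.

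Second, the merge is not ``into the quadruple pattern'': the $\alpha$-type atoms are $(y,y,y',y')$ with $y\neq y'$, a two-block partition with two hits. The paper sidesteps hit-counting altogether by enforcing the stronger pointwise requirement $\P(\cap_{i\in\mathcal I}\{Y_i=y\})=\min_{i\in\mathcal I}P_i(y)$ for every $\mathcal I\subseteq[4]$ and every $y$; inclusion--exclusion then gives $\P(\cup_i\{Y_i=y\})=\max_iP_i(y)$ automatically. Imposing this on the explicit mixture ansatz yields the linear system $\alpha_{12}+\beta_{12}=N_{12}$, $\alpha_{12}+\beta_{34}=N_{34}$ (and cyclic), which collapses to the single equation $\alpha_{12}+\alpha_{13}+\alpha_{14}=\taumaxtwo-1$; nonnegativity of all $\beta$'s is then exactly \cref{eq:final_condition}. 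I would recommend recasting your argument in this pointwise-intersection form: it makes the feasibility condition fall out of a short linear-algebra step rather than a flow argument, and it avoids any ambiguity about what a ``merge'' does to the global hit count.
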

The construction of the coupling is presented in \cref{Proof of thm: Minimal Coupling for leakage exponent}. Notably, the quantities  $N_{ij}$ are non-negative, and therefore, the condition  \cref{eq:final_condition} encompasses the condition $\taumaxtwo(\cdot) \leq 1$.  
The condition  \cref{eq:final_condition} arises after solving a system of linear equations while at the same time ensuring the non-negative of the solution. The system of linear equations has been constructed such that the coupling preserves the marginals while at the same time achieving the lower bound in \cref{Eq:MaxDoeblinMinimalCoupling}. We remark that there indeed exist PMFs that satisfy the condition in \cref{eq:final_condition}. Moreover,  we can construct a simultaneous coupling of $P_{X_1, Y_1}, \dots, P_{X_4,Y_4}$ provided that marginals $P_{Y_1}, \dots, P_{Y_4}$ satisfy the condition in \cref{eq:final_condition}. This allows us to directly extend the results in \cref{prop: Simultaneous Coupling,Bounds on Maximal Leakage in Bayesian Networks} to beyond $\taumaxtwo(\PYX) \leq 1$ for $n=4$.

\section{Conclusion and Future Directions}  
In this work, we investigated the behavior of maximal leakage over Bayesian networks on finite alphabets, establishing bounds using minimal coupling characterizations of leakage exponents and a new simultaneous coupling result. We also provided a more general condition under which the coupling characterizations held for \(|\mathcal{X}| = 4\) and demonstrated the effectiveness of our bounds through examples.  

Our work also opens up several avenues for future research. With some modifications, we believe that our framework can be generalized to related notions of maximal leakage, such as pointwise maximal leakage \cite{PointwiseMaximalLeakage}, as our coupling constructions achieve equality in \cref{lower bound for general n} for every $y \in \Y$.  Secondly, relaxing the condition \(\taumaxtwo(P_{Y|X}) \leq 1\) could lead to broader applicability of our results. Moreover, characterizing channels for which a minimal coupling achieves the bound in \cref{Eq:MaxDoeblinMinimalCoupling} remains an open problem. Finally, we believe that bounding the gap between the two terms in \cref{Eq:MaxDoeblinMinimalCoupling} could enable the extension of our bounds to general channels.

\appendices
    \section{Proof of \cref{prop: Simultaneous Coupling}} \label[appendix]{Proof of prop: Simultaneous Coupling}
        In this section, we will outline the construction of the coupling that achieves the bounds mentioned. 
        Before doing so, we introduce the constants $c_{XY}$ and $c_Y$, which are defined as follows:
        \[\begin{aligned}
            c_{XY} & \triangleq \sum_{x \in \X,y\in\Y} \min \{P_{X_1,Y_1}(x,y), \dots, P_{X_m,Y_m}(x,y)\} , \\
            c_Y & \triangleq   \sum_{y \in\Y} \min\{P_{Y_1}(y),\dots,P_{Y_m}(y)\}. 
        \end{aligned}
        \] 
        Observe that $c_Y \geq c_{XY}$. 
        Let $x^m \triangleq (x_1, x_2, \dots, x_m)$ and $y^m \triangleq (y_1, y_2, \dots, y_m)$. Additionally, we define $P_{\min}(x,y) \triangleq  \min \{P_{X_1,Y_1}(x,y), \dots, P_{X_m,Y_m}(x,y)\} $ and  
        $P_{Y_{\min}}(y) \triangleq \min \{P_{Y_1}(y), \dots, P_{Y_m}(y)\}$. We now introduce the following distributions over the product space $\X^m \times \Y^m$:
        $$
            \begin{aligned}
                G_1(x^m,y^m) & \triangleq  \frac{P_{\min}(x,y)}{c_{XY}} \I_{x^m = x, y^m = y} \\
                G_{2}(x^m,y^m) &\triangleq  \frac{P_{Y_{\min} }(y)  - \sum_{x \in \X} P_{\min}(x,y)}{c_Y -c_{XY}}   \\
                & \ \ \   \times \prod_{i=1}^m \frac{P_{X_i,Y_i}(x_i,y) -P_{\min}(x_i,y) }{P_{Y_i}(y) - \sum_{x\in\X}P_{\min}(x,y) } \I_{y^m = y}, \\
                G_3(x^m,y^m) & \triangleq \prod_{i=1}^m \frac{ P_{X_i,Y_i}(x_i,y_i) - P_{\min}(x_i,y_i) }{P_{Y_i}(y_i) - \sum_{x \in \X}P_{\min}(x,y_i) }  
                \times H(y^m) ,
            \end{aligned}
        $$
        where $H(y^m)$ is defined below and is based on the construction of minimal coupling $P_{Y_1, \dots, Y_m}(y_1, \dots, y_m)$ of the marginals $P_{Y_1}, \dots, P_{Y_m}$ as in \cite[Theorem 4]{MakurSingh2024} as
        $$
            H(y^m) \triangleq \frac{ P_{Y_1, \dots, Y_m}(y^m) - P_{Y_{\min}}(y) \I_{y^m = y }}{1  - c_Y}.
        $$
        The coupling $P_{Y_1, \dots, Y_m}$ exists when $\taumaxtwo(P_{Y_1}, \dots, P_{Y_m} ) \leq 1$ and by construction $H(y^m)$ is clearly non-negative. Also, note that $ H(y^m)$ is, in fact, a PMF, and so does $G_1, G_2, G_3$. We define the overall coupling $P_{X_1, Y_1, \dots, X_m, Y_m} $ as a convex combination of the the PMFs as:
        \[
        \begin{aligned}
                   &P_{X_1, Y_1, \dots, X_m, Y_m}(x^m, y^m) = c_{XY} G_1(x^m, y^m)  + \\
                & \ \ \ \ \ \   (c_{Y} - c_{XY}) G_2(x^m, y^m) + (1 - c_{Y}) G_3(x^m, y^m).
        \end{aligned}
        \] 
        Now, we will show that the above joint distribution has the right marginals jointly with respect to $X$ and $Y$. Define $x^{(i)} \triangleq \left(x_{1}, \ldots, x_{i-1}, x_{i+1}, \ldots, x_{m}\right)$ and $y^{(i)} \triangleq \left(y_{1}, \ldots, y_{i-1}, y_{i+1}, \ldots, y_{n}\right)$. Observe that for all $x, y$ such that $x_{i}=x $ and $y_{i}=y$, we have
        \begin{align*}
    &          \sum_{x^{(i)},y^{(i)}} P_{X_1,Y_1, \ldots, X_{m}, Y_{m}}(x^{m},  y^{m})
            =\sum_{x^{(i)}, y^{(i)}}  c_{X Y} G_{1}(x^m, y^m) \\
            & \ \ \ \ \ \  + \sum_{x^{(i)}} \sum_{y(i)}\left(c_{Y}-c_{X Y}\right) G_{2}\left(x^m, y^{m}\right) \\
            & \ \ \ \ \ \ \ \ \ \ \ \ \ \ \ \  +\sum_{x^{(i)}} \sum_{y^{(i)}}\left(1-c_{Y}\right) G_{3}\left(x^{m}, y^{m}\right) \\
            & =P_{\min }(x, y)+\left(P_{Y_{\min} }(y) -\sum_{x \in \X} P_{\min }(x, y)\right) \\
            & \ \ \ \ \ \ \ \ \   \times \left( \frac{P_{X_{i}, Y_{i}}(x, y)-P_{\min }(x, y)}{P_{Y_{i}}(y)-\sum_{x \in \X} P_{\min }(x, y)} \right)  \\ 
            & \    +\frac{\left(P_{X_{i}, Y_{i}}(x, y)-P_{\min }(x, y)\right)\left(P_{Y_{i}}(y)-P_{Y_{\min} }(y)\right)}{P_{Y_{i}}(y)-\sum_{x \in \X} P_{\min }(x, y)} \\
            & =P_{\min }(x, y)+\frac{(P_{X_{i}, Y_{i}}(x, y)-P_{\min}(x, y)) }{P_{Y_{i}}(y)-\sum_{x \in \X} P_{\min }(x, y)} \\
            & \ \ \ \ \ \ \ \ \ \ \ \ \ \ \ \ \  \times  (P_{Y_{i}}(y)-\sum_{x \in \X} P_{\min}(x, y)) \\
            & =P_{\min }(x, y)+P_{X_{i}, Y_{i}}(x, y)-P_{\min }(x, y) \\
            & =P_{X_{i}, Y_{i}}(x, y).
            \end{align*}
        Now, we will complete the proof by showing that the above coupling satisfies the condition
        \begin{equation} \label{condition to be proven}
            \sum_{y \in \Y } \P( \cup_{i=1}^m \{ Y_{i}=y\})  =\taumax(P_{Y_{1}}, \ldots, P_{Y_{n}}).
        \end{equation}
        This follows directly by observing that marginalizing the joint coupling $\P(x^m,y^m)$ with respect to $x^m$ as 
         \begin{equation*}
        \begin{aligned}
            & \sum_{x^m \in \X^m} \P(x^m,y^m)  = \sum_{x \in \X} P_{\min }(x, y) \I_{y^m = y} \\
            & \ \ \   + (P_{Y_{\min} }(y)  - \sum_{x \in \X} P_{\min}(x,y) )\I_{y^m = y}  + (1 - c_Y) H(y^m)\\ 
            & = P_{Y_{\min} }(y) \I_{y^m = y } +   P_{Y_1, \dots, Y_m}(y^m) - P_{Y_{\min}}(y) \I_{y^m = y } \\
            & = P_{Y_1, \dots, Y_m}(y^m)
        \end{aligned}
        \end{equation*}    
        Now, since the $P_{Y_1, \dots, Y_m}(y^m)$ is constructed based on \cite[Theorem 4]{MakurSingh2024}, therefore it automatically satisfies the condition in \Cref{condition to be proven}. This completes the proof. \hfill $\square$  

We remark that the condition $\taumaxtwo(P_{Y_{1}},P_{Y_{2}}, \dots, P_{Y_{m}}) \leq 1$ is necessary for $m=3$, since for any coupling of PMFs $P_1, P_2, P_3$ corresponding to random variables $Y_1, Y_2, Y_3$, we have the lower bound:
$$
\begin{aligned}
 \sum_{y  \in \Y}  \P(\cup_{i=1}^{3}\{Y_{i} =y\}) & \geq \! \sum_{y  \in \Y} \max \{P_{1}(y), \mathbb{P}(\cup_{i=2}^{3}\{Y_{i}=y\})\}  \\
             & \geq \sum_{y  \in \Y} \max \{P_1(y), P_{2}(y),  P_{3}(y)\}.    
\end{aligned}
$$
Since by \cite{MakurSingh2024}, for $n=3$ the lower bound is achieved if and only if $\taumaxtwo(P_{{1}},P_{{2}}, \dots, P_{{3}}) \leq 1$. The same argument also holds for simultaneous coupling of $P_{X_1, Y_1}, P_{X_2, Y_2}, P_{X_3, Y_3}$ as the coupling needs to preserve the marginals for each of the pairs $X_i$ and $Y_i$ for $i \in [3]$. 

    \section{Proof of \cref{thm: Minimal Coupling for leakage exponent}} \label[appendix]{Proof of thm: Minimal Coupling for leakage exponent}
    We will now demonstrate a construction of coupling that achieves the lower bound $\taumax(\PYX)$ in \cref{prop:MaxDoeblinCoupling} and for $\taumaxtwo(P_1, P_2, P_3, P_4) \geq 1$. For any subset $\calI \subset [4]$ with  $|\calI| \in \{2,3\}$, we begin by defining the following quantities:
         \begin{align}
            & \nonumber \tau  \triangleq \tau(P_1,P_2,P_3, P_4), \ \ \ \taumax  \triangleq \tau(P_1, P_2, P_3, P_4), \\
            & \nonumber \taumaxtwo  \triangleq \taumaxtwo(P_1, P_2, P_3, P_4), \\
            & \nonumber \tau_{I}  \triangleq \sum_{y\in \Y}\min_{i \in \calI} P_i(y) , \ \ \  
            \taupair  \triangleq \sum_{\calI: \calI \subset [4], |\calI| =2} \tau_{\calI}, \\
            & \tautrip  \triangleq \sum_{\calI: \calI \subset [4], |\calI| =3} \tau_{\calI}. \label{eq:tautrip}
        \end{align}
         We will utilize the following relations obtained from the maximum-minimum principle \cite{maximumminimumidentity}: 
        \begin{equation}
        \begin{aligned}
            \taumaxtwo   &=  \taupair  -2\tautrip  + 3 \tau.
        \end{aligned}
        \end{equation}
        Furthermore, we define $y^4 = (y_1, y_2, y_3, y_4)$ and $P_{\min}(y) \triangleq \min \{P_{1}(y), P_{2}(y), P_{3}(y),P_4(y)\}$. Additionally, we introduce the following PMFs:
        \begin{equation}        \label{bunch of eqns}
        \begin{aligned}
        & R_1(y)  \triangleq
        \frac{P_{1}(y)-  \min\{P_1(y), \max\{P_2(y),P_3(y),P_4(y) \} \}  }{ 1 -\tau_{12}-\tau_{13} -\tau_{14} + \tau_{123}+\tau_{124} + \tau_{134}  - \tau }, \\
        & R_2(y)  \triangleq
        \frac{P_{2}(y)-  \min\{P_2(y), \max\{P_1(y),P_3(y),P_4(y) \} \}  }{ 1 -\tau_{12}-\tau_{23} -\tau_{24} + \tau_{123}+\tau_{124} + \tau_{234}  - \tau }, \\
        & R_3(y)  \triangleq
        \frac{P_{3}(y)-  \min\{P_3(y), \max\{P_1(y),P_2(y),P_4(y) \} \}  }{ 1 -\tau_{12}-\tau_{23} -\tau_{24} + \tau_{123}+\tau_{124} + \tau_{234}  - \tau }, \\
        & R_4(y)  \triangleq
        \frac{P_{4}(y)-  \min\{P_4(y), \max\{P_1(y),P_2(y),P_3(y) \} \}  }{ 1 -\tau_{14}-\tau_{24} -\tau_{34} + \tau_{124}+\tau_{134} + \tau_{234}  - \tau }. \\
        \end{aligned}
        \end{equation}
        We now define the following PMFs where $Q_{0}$ corresponds to a PMF whose support is on the points where $y_1, y_2, y_3, y_4$ are all equal and $Q_{1,i}$ for $i \in [4]$ are the PMFs supported on the points where one of the $y_i$ is different from the remaining three $y_j$'s as:
    \begin{align*}
        & Q_{0}(y^4)  \triangleq \frac{P_{\min}(y) }{\tau}  \I_{ y_{1} = y_{2} = y_{3} = y_{4} = y }, \\
        & Q_{1,1}(y^4 ) \triangleq R_1(y_1) \times \\
        & \ \ \ \ \ \ \ \ \frac{\min\{P_{2}(y), P_{3}(y), P_{4}(y) \}- P_{\min}(y) }{\tau_{234}-\tau} \I_{ y_{2} = y_{3} = y_{4} = y } , \\
        & Q_{1,2}(y^4 ) \triangleq R_2(y_2) \times \\
        & \ \ \ \ \ \ \ \ \frac{\min\{P_{1}(y), P_{3}(y), P_{4}(y) \}- P_{\min}(y) }{\tau_{134}-\tau} \I_{ y_{1} = y_{3} = y_{4} = y } , \\
        & Q_{1,3}(y^4 ) \triangleq R_3(y_3) \times \\
        & \ \ \ \ \ \ \ \ \frac{\min\{P_{1}(y), P_{2}(y), P_{4}(y) \}- P_{\min}(y) }{\tau_{124}-\tau} \I_{ y_{1} = y_{2} = y_{4} = y } , \\
        & Q_{1,4}(y^4 ) \triangleq R_4(y_4) \times \\
        & \ \ \ \ \ \ \ \ \frac{\min\{P_{1}(y), P_{2}(y), P_{3}(y) \}- P_{\min}(y) }{\tau_{123}-\tau} \I_{ y_{1} = y_{2} = y_{3} = y } , \\
    \end{align*}
    where quantities such as $\tau_{234}$ corresponds to the definition of $\tau_{I}$ in \cref{eq:tautrip} with $I = \{2,3,4\}$. Moreover, for any distinct $i,j \in [4]$, we define quantities $T_{ij}$ and their normalizing constants $N_{ij}$:
    \begin{equation}
    \begin{aligned}
       T_{ij}(y) & \triangleq \min\{P_i(y), P_j(y) \} - \min\{P_i(y), P_j(y), P_k(y) \} \\
        & \ \ \ \  - \min\{P_i(y), P_j(y), P_l(y) \} + P_{\min}(y) \\
        N_{ij} & = \sum_{y \in \Y} T_{ij}(y)
        \end{aligned}
    \end{equation}
    where for each pair $(i,j)$, indices $k$ and $l$ are the remaining two elements of $[4]\setminus\{i,j\}$. We define the following coupling distribution:
    \begin{align*}
        Q_{2, 12}(y_{1}, y_{2}, y_{3}, y_{4}) & \triangleq  R_{1}(  y_{1} ) \times  R_{2}(  y_{2} )  \frac{T_{34} (y) }{ N_{34}} \I_{ y_{3} = y_{4} = y },\\
        Q_{2, 13}(y_{1}, y_{2}, y_{3}, y_{4}) & \triangleq  R_{1}(  y_{1} ) \times  R_{3}(  y_{3} )  \frac{T_{24} (y) }{ N_{24}} \I_{ y_{2} = y_{4} = y },\\
        Q_{2, 14}(y_{1}, y_{2}, y_{3}, y_{4}) & \triangleq  R_{1}(  y_{1} ) \times  R_{4}(  y_{4} )  \frac{T_{23} (y) }{ N_{23}} \I_{ y_{2} = y_{3} = y },\\
        Q_{2, 23}(y_{1}, y_{2}, y_{3}, y_{4}) & \triangleq  R_{2}(  y_{2} ) \times  R_{3}(  y_{3} )  \frac{T_{14} (y) }{ N_{14}} \I_{ y_{1} = y_{4} = y },\\
        Q_{2, 24}(y_{1}, y_{2}, y_{3}, y_{4}) & \triangleq  R_{2}(  y_{2} ) \times  R_{4}(  y_{4} )  \frac{T_{13} (y) }{ N_{13}} \I_{ y_{1} = y_{3} = y },\\
        Q_{2, 34}(y_{1}, y_{2}, y_{3}, y_{4}) & \triangleq  R_{3}(  y_{3} ) \times  R_{4}(  y_{4} )  \frac{T_{12} (y) }{ N_{12}} \I_{ y_{1} = y_{2} = y }.
        \end{align*}
    Define the coupling distribution as
        \begin{align} 
    & \nonumber            \P(y_{1}, y_{2}, y_{3}, y_{4}) = \tau Q_{0}(y^4)  +
                (\tau_{234}-\tau) Q_{1,1}(y^4) + \\
    & \nonumber            (\tau_{134}-\tau) Q_{1,2}(y^4) +
                (\tau_{124}-\tau) Q_{1,3}(y^4)  \\& \nonumber 
                + (\tau_{123}-\tau) Q_{1,4}(y^4)  + \beta_{34}  Q_{2, 12}(y^{4})  + \beta_{24}  Q_{2, 13}(y^{4}) \\& \nonumber
    + \beta_{23}  Q_{2, 14}(y^{4})
    + \beta_{14}  Q_{2, 23}(y^{4})
    + \beta_{13}  Q_{2, 24}(y^{4}) +\\& \nonumber
        \beta_{12}  Q_{2, 34}(y^{4})    + \alpha_{12} \frac{ T_{12}(y) }{ N_{12}  } \frac{ T_{34}(y^\prime ) }{ N_{34}  } \I_{ y_{1} =  y_{2} = y , y_{3} = y_{4} = y^\prime } 
    \\ & \nonumber + \alpha_{13} \frac{ T_{13}(y) }{ N_{13}  } \frac{ T_{24}(y^\prime ) }{ N_{24}  } \I_{ y_{1} =  y_{3} = y , y_{2} = y_{4} = y^\prime } \\&
    + \alpha_{14} \frac{ T_{14}(y) }{ N_{14}  } \frac{ T_{23}(y^\prime ) }{ N_{23}  } \I_{ y_{1} =  y_{4} = y , y_{2} = y_{3} = y^\prime },      \label{convex combination n4}     
        \end{align}
        where $\alpha_{ij}, \beta_{ij}$ are non-negative constants which will be found by solving a system of equations such that the coupling preserves the marginals and achieves the lower bound in \cref{Eq:MaxDoeblinMinimalCoupling}. Observe that our coupling structure is such that for any $\calI \subset [4]$ with cardinality $|\calI| \geq 3$ and for all $y \in \Y$, we have
        \begin{equation} \label{INTERSTINGPROP}
            \P( \cap_{i \in \calI} \{Y_i = y\}  ) = \min_{i \in \calI} P_i(y).
        \end{equation}
        Now, we impose the following constraints that for any $\calI \subset [4]$ we have the property in  \cref{INTERSTINGPROP} for $|\calI| = 2$ and for all $y \in \Y$. With some algebra, this leads us to the following constraints:
    \begin{equation} \label{pairwise constraints}
    \begin{aligned}
    & \alpha_{12} + \beta_{12} = N_{12}, \ \ \alpha_{12} + \beta_{34} = N_{34}, \\
    & \alpha_{13} + \beta_{13} = N_{13}, \ \ \alpha_{13} + \beta_{24} = N_{24}, \\
    & \alpha_{14} + \beta_{14} = N_{14}, \ \ \alpha_{14} + \beta_{23} = N_{23}. \\
    \end{aligned}
    \end{equation}
    To see the first constraint, note that for any $y \in \Y$, we have
    \begin{align*}
       & \sum_{y_3, y_4 \in \Y} \P(y,y,y_3,y_4) \stackrel{\zeta_1}{=}\underbrace{ P_{\min}(y)}_{Q_0(y^4)} \\ 
       & \ \ \  + \underbrace{ \min\{P_{1}(y), P_{2}(y), P_{4}(y) \} - P_{\min}(y) }_{Q_{1,3}(y^4)} \\ 
       & \ \ \ + \underbrace{ \min\{P_{1}(y), P_{2}(y), P_{3}(y) \}- P_{\min}(y) }_{Q_{1,4}(y^4)} + \underbrace{\beta_{12}\frac{T_{12}(y)}{N_{12}}}_{Q_{2,34}(y^4)} \\
       & \ \ \ + \alpha_{12} \frac{T_{12}(y)}{N_{12}} \\ 
       &\stackrel{\zeta_2}{=} \min\{P_{1}(y), P_{2}(y) \},
    \end{align*}
    where $\zeta_1$ follows from \cref{convex combination n4} and the underbraced quantities indicate which mixture component contributes that mass; for instance, the term underscored $Q_0(y^4)$ denotes the contribution of the $Q_0(\cdot)$ component, and similarly for $Q_{1,3}, Q_{1,4}, Q_{2,34}$ and $\zeta_2$ follows by using the contraint $\alpha_{12} + \beta_{12} = N_{12}$ and the defintion of $T_{12}(y)$. The rest of the constraints in \cref{pairwise constraints} follow by a similar calculation. Similarly, the following constraints yield the fact that the coupling preserves marginals
   \begin{equation}
       \label{triplet constraints}
           \begin{aligned}
    & \beta_{23} + \beta_{24} + \beta_{34} + \tau_{234} - \tau  = N_{R_1}, \\
    & \beta_{13} + \beta_{14} + \beta_{34} + \tau_{134} - \tau  = N_{R_2}, \\
    & \beta_{12} + \beta_{14} + \beta_{24} + \tau_{124} - \tau  = N_{R_3}, \\
    & \beta_{12} + \beta_{13} + \beta_{23} + \tau_{123} - \tau  = N_{R_4}, 
    \end{aligned}
   \end{equation}
    where $N_{R_i}$, for $i\in [4]$, is the normalizing constant (denominator term) in the definition of $R_i$ in \cref{bunch of eqns}. To see this, note that
    \begin{align*}
     &  \sum_{y_2,y_3, y_4 \in \Y} \P(y,y_2,y_3,y_4)  \stackrel{\zeta_1}{=} P_{\min}(y) + (\tau_{234} - \tau)R_1(y) \\
     & \ \ \ + \min\{P_{1}(y), P_{3}(y), P_{4}(y) \}  - P_{\min}(y) \\
      & \ \ \  + \min\{P_{1}(y), P_{2}(y), P_{4}(y) \} - P_{\min}(y) \\ 
      & \ \ \ +  \min\{P_{1}(y), P_{2}(y), P_{3}(y) \} - P_{\min}(y) \\ & 
 \ \ \      + \beta_{34}R_1(y) + \beta_{24} R_1(y)  + \beta_{23}R_1(y)  + \beta_{14} \frac{T_{14}(y)}{N_{14}} \\ 
 & \ \ \ + \beta_{13} \frac{T_{13}(y)}{N_{13}} + \beta_{12} \frac{T_{12}(y)}{N_{12}}  +  \alpha_{12} \frac{T_{12}(y)}{N_{12}} +  \alpha_{13} \frac{T_{13}(y)}{N_{13}} \\ 
 & \ \ \ \ \ \ \ \ \ \ \ \ \ \ \ \ \ \  + \alpha_{14} \frac{T_{14}(y)}{N_{14}}  \\ 
       &\stackrel{\zeta_2}{=}  (\beta_{34} +  \beta_{24} + \beta_{23} + \tau_{234} - \tau)R_1(y) + T_{12}(y) + T_{13}(y) \\ 
       & + T_{14}(y) - 2 P_{\min}(y)  + \min\{P_{1}(y), P_{2}(y), P_{3}(y) \} \\ 
       & + \min\{P_{1}(y), P_{2}(y), P_{4}(y) \} + \min\{P_{1}(y), P_{3}(y), P_{4}(y) \}    \\
       &\stackrel{\zeta_3}{=}  P_{1}(y)-  \min\{P_1(y), \max\{P_2(y),P_3(y),P_4(y) \} \}  \\
       & + \min\{P_1(y), P_2(y)\} + \min\{P_1(y), P_3(y)\} \\ 
      &  + \min\{P_1(y), P_4(y)\}  
        -   \min\{P_1(y), P_2(y), P_3(y)\} \\
        & -  \min\{P_1(y), P_2(y), P_4(y)\} -  \min\{P_1(y), P_3(y), P_4(y)\} \\
        & \ \ \ \ \ \ \ \ \ \ \ \ \ \ \ \ \ +   P_{\min}(y) \\ 
       &\stackrel{\zeta_4}{=}  P_1(y),
    \end{align*}
    where $\zeta_1$ follows from \cref{convex combination n4}, $\zeta_2$ follows using constraints \cref{pairwise constraints}, $\zeta_3$ follows by using the first constraint in \cref{triplet constraints} and $\zeta_4$ follows by the maximum-minimum principle \cite{maximumminimumidentity}.
    Now since we have shown that for any subset $\calI \subseteq [4]$ and for all $y \in \Y$, \cref{INTERSTINGPROP} holds, we have by maximum-minimums principle $\P( \cup_{i \in [4]} \{Y_i = y\}  ) = \max\{ P_1(y), P_2(y), P_3(y), P_4(y) \} $. 
    Writing each of the constraints in \cref{triplet constraints} in terms of $\alpha_{ij}$, we end up with the following single constraint
    $$
        \alpha_{12} + \alpha_{13} +  \alpha_{14} = \taumaxtwo - 1.
    $$
    Thus, for any arbitrary constants $a, b, c \geq 0$ such that $a + b + c = 1$ satisfying
    $$
    \begin{aligned}
    \min\{ N_{12}, N_{34} \} \geq a(\taumaxtwo - 1 ), \\
    \min\{ N_{13}, N_{24} \} \geq b(\taumaxtwo - 1 ), \\
    \min\{ N_{14}, N_{23} \} \geq c(\taumaxtwo - 1 ), 
    \end{aligned}
    $$
    then setting $\alpha_{12} = a (\taumaxtwo - 1) $, $\alpha_{13} = b (\taumaxtwo - 1) $,  $\alpha_{14} = c (\taumaxtwo - 1) $, and corresponding $\beta_{ij}$ determined from \cref{pairwise constraints} yields a valid construction of the coupling which preserves the marginals as well as achieves the lower bound in \cref{Eq:MaxDoeblinMinimalCoupling}. The following condition ensures the existence of such constants $a, b, c$:
        \begin{align}
    \nonumber    \min\{ N_{12}, N_{34} \} + \min\{ N_{13}, N_{24} \} +   \min\{& N_{14}, N_{23} \}  \geq    \\
    & \taumaxtwo -1.  \label{final condition}
    \end{align}
    This completes the proof.
    \qed

\balance
\bibliographystyle{IEEEtran}
\bibliography{references}

\end{document}